\documentclass[aps,pre,twocolumn,superscriptaddress]{revtex4-1}

\usepackage{euscript,amsmath,amssymb,amsfonts,graphicx,bm,amsthm,mathtools}

  \usepackage{paralist}
  \usepackage{epstopdf}
  \usepackage{graphics} 
 \usepackage[latin1]{inputenc}
\usepackage[caption=false]{subfig}
\usepackage{soul}

\usepackage{latexsym,amssymb,enumitem,amsmath,amsthm} 
  \usepackage{graphicx} 
  \usepackage{tikz}
  \usepackage[hidelinks]{hyperref}
\usepackage{latexsym,amssymb,amsmath} 
\usepackage{pgfplots}
\usepackage{soul,xcolor}

\newcommand{\dd}{\textup{d}}
\def\eps{\varepsilon}
\def\E{\mathbb{E}}
\def\P{\mathbb{P}}
\def\R{\mathbb{R}}
\newcommand{\Markov}[2]{\underset{#1}{\overset{#2}{\rightleftharpoons}}}
\def\tmin{t_{\min}}
\def\O{\mathcal{O}}
\def\taut{\widetilde{\tau}}

\def\tc{t_{\textup{c}}}
\DeclarePairedDelimiter\floor{\lfloor}{\rfloor}

\newtheorem{assumption}{Assumption}
\newtheorem{theorem}{Theorem}
\newtheorem{lemma}{Lemma}
\newtheorem{proposition}{Proposition}
\newtheorem{corollary}{Corollary}
\theoremstyle{plain}
\theoremstyle{remark}

\theoremstyle{definition}

\begin{document}


\title[]{Universality and ambiguity in extremes of anomalous diffusion}


\author{Sean D. Lawley}
\email[]{lawley@math.utah.edu}
\affiliation{University of Utah, Department of Mathematics, Salt Lake City, UT 84112 USA}


\date{\today}

\begin{abstract}
Many biophysical processes begin when the fastest searcher finds a target out of many random searchers, which is called an extreme or fastest first passage time (fFPT). In some models, (i) the fFPT vanishes logarithmically as the number of searchers grows, and (ii) the fFPT can be faster for subdiffusive search compared to normal diffusion. Though mathematically rigorous, the relevance of (i) and (ii) to actual physical systems is suspect since their derivations involve searchers which move with unbounded speed. Indeed, we previously proved that the fFPT for searchers with bounded speed converges exponentially to a strictly positive minimal search time as the number of searchers grows. In this paper, we study fFPTs for a broad class of anomalous and normal diffusion models with bounded or unbounded speed. These models include scaled Brownian motion, Riemann-Liouville fractional Brownian motion, and fractional Brownian motion. For all of these models, we show that the fFPT decays logarithmically in the number of searchers and that subdiffusion can be faster than normal diffusion (we further show that superdiffusion can be slower than normal diffusion). In this sense, features (i) and (ii) are rather universal. On the other hand, we show that the parameter regimes in which (i) and (ii) are valid depend on the particulars of the individual model, and thus ambiguities remain in the relevance of these features to specific physical systems.
\end{abstract}

\pacs{}

\maketitle

\section{\label{sec:intro}Introduction}

Timescales are often studied in terms of first passage times (FPTs), which generically describe the time it takes a ``searcher'' to find a ``target'' \cite{grebenkov2024target}. Mathematically, if $X(t)$ denotes the one-dimensional position of a searcher at time $t\ge0$ that starts at $X(0)=0$, then the FPT to some level $L>0$ is
\begin{align}\label{eq:tau}
    \tau
    =\inf\{t\ge0:X(t)\ge L\}.
\end{align}
A variety of biophysical processes are triggered by the FPT of the fastest searcher out of many searchers \cite{lawley2024competition}. Such ``fastest FPTs'' (fFPTs) are typically modeled as the minimum of $N\gg1$ FPTs,
\begin{align}\label{eq:TN}
    T_N
    :=\min\{\tau_1,\dots,\tau_N\},
\end{align}
where $\{\tau_1,\dots,\tau_N\}$ are $N$ independent realizations of the FPT $\tau$ in \eqref{eq:tau}. The fFPT in \eqref{eq:TN} is also called an ``extreme FPT'' since it is an extreme statistic \cite{coles2001}, though this term also encompasses slowest FPTs \cite{lawley2023slow}.

If the searchers move by Brownian motion with diffusivity $D>0$, then the mean fFPT decays as \cite{weiss1983}
\begin{align}\label{eq:normal}
    \E[T_N]
    \sim\frac{L^2}{4D\ln N}\quad\text{as }N\to\infty,
\end{align}
where $f\sim g$ denotes $\lim f/g=1$. 
The result \eqref{eq:normal} holds in very general circumstances for diffusive motion if $L$ is replaced by an appropriate geodesic distance \cite{lawley2020uni}. 

While diffusion is the prototypical model of stochastic search, anomalous diffusion has been observed in many natural and engineered systems \cite{metzler2000}, and many works have sought to understand FPTs of anomalous diffusion \cite{rangarajan2000anomalous, condamin2007, condamin2008, kosztolowicz2022first, gomez2024first}. Anomalous diffusion is marked by the following power law growth of the mean-squared displacement,
\begin{align}\label{eq:msd}
    \E[(X(t))^2]
    =2D\tc(t/\tc)^\alpha,\quad \alpha>0,
\end{align}
for some characteristic timescale $\tc>0$. 
Anomalous diffusion is classified by the exponent $\alpha$ as subdiffusion if $\alpha\in(0,1)$, superdiffusion if $\alpha\in(1,2)$, ballistic if $\alpha=2$, and hyperballistic if $\alpha>2$, whereas the case $\alpha=1$ is normal diffusion (i.e.\ not anomalous).

For a given anomalous diffusion exponent $\alpha$, a variety of models of stochastic motion yield the nonlinear mean-squared displacement in \eqref{eq:msd}. A common model of subdiffusion is a (time-)fractional Fokker-Planck equation \cite{metzler1999}, and it was proven that the corresponding mean fFPT decays as \cite{lawley2020sub}
\begin{align}\label{eq:sub}
    \E[T_N^{\mathrm{fFPe}}]
    \sim\frac{t_\alpha}{(\ln N)^{2/\alpha-1}}\quad\text{as }N\to\infty,
\end{align}
where $t_\alpha$ is a certain subdiffusive timescale (the superscript ``fFPe'' emphasizes that \eqref{eq:sub} holds for the fractional Fokker-Planck equation model of subdiffusion).

Consider the following two arguably non-physical features of \eqref{eq:normal} and \eqref{eq:sub}. 
\begin{enumerate}[label=(\roman*), start=1]
    \item The asymptotic in \eqref{eq:normal} implies that diffusive searchers can reach the target arbitrarily fast for sufficiently large $N$.
    \item The asymptotics in \eqref{eq:normal} and \eqref{eq:sub} imply that the fFPT can be faster for subdiffusion compared to normal diffusion.
\end{enumerate}
Though the technical mathematical validity of features (i) and (ii) is unquestioned (for Brownian motion and the fractional Fokker-Planck equation), their relevance to actual physical systems is suspect.

As detailed in Ref.~\cite{lawley2021pdmp}, feature (i) must break down for large enough $N$ if the searcher speed is bounded. Indeed, if the maximum searcher speed is $v<\infty$, then
\begin{align}\label{eq:lowerbound0}
    T_N\ge L/v>0\quad\text{for all }N\ge1.
\end{align}
In fact, if searchers move with fixed speed $v$ and change directions at exponentially distributed times, then $T_N$ converges exponentially to $L/v>0$ as $N\to\infty$ with a Weibull distributed correction for finite $N$ \cite{lawley2021pdmp}. 
The discrepancy between the vanishing time in \eqref{eq:normal} and the lower bound in \eqref{eq:lowerbound0} stems from the infinite speed of propagation of solutions to the diffusion equation \cite{keller2004, kuske1997, joseph1989}. To elaborate, let $p(x,t)$ be the probability density for the searcher position $X(t)=x$ given that $X(0)=0$. If searchers move by Brownian motion, then their position density is Gaussian,
\begin{align}\label{eq:pg}
    p(x,t)
    =\frac{1}{\sqrt{4\pi Dt}}\exp\bigg(\frac{-x^2}{4Dt}\bigg)>0,\quad x\in\R,\,t>0,
\end{align}
and thus $p(x,t)>0$ for all $x\in\R$ if $t>0$. Hence, the speed of Brownian search is unbounded in the sense that a searcher has a strictly positive probability of being arbitrarily far from its starting position at any positive time. In contrast, if the maximum searcher speed is $v<\infty$, then 
\begin{align}\label{eq:p0}
    p(x,t)=0\quad\text{if }|x|>vt.
\end{align}
See Ref.~\cite{lawley2021pdmp} for details on the discrepancy between \eqref{eq:pg} and \eqref{eq:p0}. This discrepancy was also recently studied in the context of fFPTs and experiments involving photon transport through a scattering medium \cite{carroll2025measurements}. An interesting and detailed analysis of this discrepancy also recently appeared in Ref.~\cite{grebenkov2026fastest}.

Feature (ii) is counterintuitive because subdiffusion is typically understood to be slower than normal diffusion. Is feature (ii) an artifact of the infinite speed of propagation (akin to the discrepancy between \eqref{eq:pg} and \eqref{eq:p0})? Perhaps feature (ii) is an idiosyncrasy peculiar to the fractional Fokker-Planck equation model of subdiffusion? Importantly, the potential for subdiffusive search to be faster than diffusive search is not a mere theoretical curiosity. Indeed, some have suggested that biological cells benefit from their crowded internal state because crowding induces subdiffusion which enhances the intracellular search processes needed for protein complex formation and signal propagation \cite{guigas2008}.

In this paper, we investigate the fFPTs of some anomalous diffusion processes. We start in section~\ref{sec:gaussian} by considering a broad class of Gaussian processes with the power law mean-squared displacement in \eqref{eq:msd}. This class includes scaled Brownian motion (sBm), Riemann-Liouville fractional Brownian motion (RLfBm), and fractional Brownian motion (fBm), which are canonical models of anomalous diffusion \cite{lim2002self} (both subdiffusion and superdiffusion). For this broad class of models, we prove that the $m$th moment of the fFPT decays logarithmically,
\begin{align}\label{eq:main1}
    \E[(T_N)^m]
    \sim
    (\tc)^m\bigg(\frac{L^2}{4D\tc\ln N}\bigg)^{m/\alpha}\quad\text{as }N\to\infty.
\end{align}
Hence, the fFPT increases as $\alpha$ increases for large $N$. More precisely, the asymptotic expression in \eqref{eq:main1} is an increasing function of $\alpha$ if
\begin{align*}
    N
    >\exp\bigg(\frac{L^2}{4D\tc}\bigg).
\end{align*}
In this sense, subdiffusion can be faster than normal diffusion, which in turn can be faster than superdiffusion. 
However, these Gaussian models have unbounded speed in the sense that the probability density for the searcher position $X(t)=x$ is strictly positive at all $x\in\R$ at every strictly positive time $t>0$, 
\begin{align}\label{eq:pg2}
    p(x,t)
    =\frac{1}{\sqrt{4\pi D\tc(t/\tc)^\alpha}}\exp\bigg(\frac{-x^2}{4D\tc(t/\tc)^\alpha}\bigg)>0.
\end{align}

\begin{figure}
\centering
\includegraphics[width=1\linewidth]{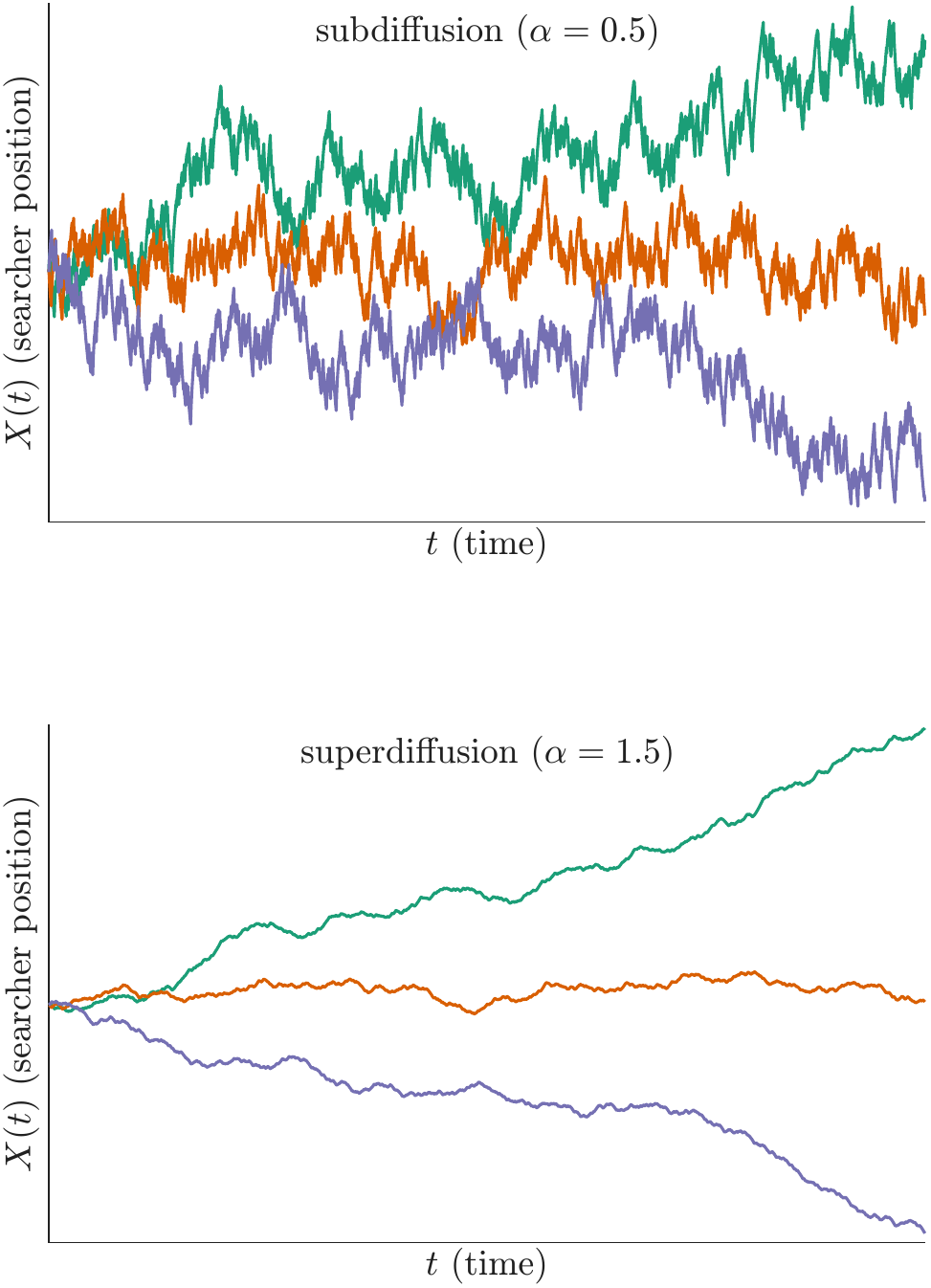}
\caption{Example trajectories of a bounded speed anomalous diffusion \cite{dean2021position}. The process is defined in \eqref{eq:finitespeedprocess}.}
\label{fig:trajectory}
\end{figure}

Therefore, in section~\ref{sec:finite}, we consider bounded speed analogs to sBm and RLfBm. The bounded speed analog to sBm is a time-change of the bounded speed analog to Brownian motion studied in Ref.~\cite{lawley2021pdmp}. The bounded speed analog to RLfBm was introduced in Ref.~\cite{dean2021position} and its fFPTs were recently studied numerically in Ref.~\cite{grebenkov2026fastest}. Figure~\ref{fig:trajectory} plots trajectories of the bounded speed RLfBm for $\alpha=1/2$ (top panel) and $\alpha=3/2$ (bottom panel). We prove that the fFPT for these bounded speed processes converges exponentially to a strictly positive search time $t_{\min}>0$ as $N\to\infty$, and we rigorously derive explicit formulas for all the moments of the fFPT and the fFPT probability distribution for large $N$. In particular, we prove that
\begin{align}\label{eq:main2}
    T_N
    \approx t_{\min}(1+\chi_N),
\end{align}
where $\chi_N$ is a certain Weibull type random variable (the explicit formula for $\chi_N$ and the precise sense of the approximation in \eqref{eq:main2} are given in section~\ref{sec:finite}).  

Furthermore, these bounded speed analogs depend on a dimensionless parameter $\eps>0$ and converge to their Gaussian counterparts (sBm and RLfBm) as $\eps\to0$. Hence, one expects that if $0<\eps\ll1$, then the fFPTs for such bounded speed processes are (a) well-approximated by \eqref{eq:main1} if $1\ll N\ll N_1$, and are (b) well-approximated by \eqref{eq:main2} if $N\gg N_2$, where $N_1\le N_2$ are some $\eps$-dependent thresholds which determine the transition between these two parameter regimes. We derive formulas for $N_1$ and $N_2$ in section~\ref{sec:finite}. We compare our theoretical predictions to numerical simulations in section~\ref{sec:numerics}. We conclude that features (i) and (ii) are not artifacts of unbounded search speed and that feature (ii) is not peculiar to the fractional Fokker-Planck model of subdiffusion. Nevertheless, our results show that the parameter regimes in which features (i) and (ii) are relevant depend on the specifics of the individual model.

\section{\label{sec:gaussian}Unbounded speed anomalous diffusions}

We now derive the fFPT asymptotics for a broad class of anomalous diffusion processes.

\subsection{\label{sec:assumptions}Assumptions}

Assumption~\ref{assumption} collects the assumptions that we make on search processes with unbounded speed.

\begin{assumption}\label{assumption}
    Assume $X=\{X(t)\}_{t\ge0}$ is a mean-zero Gaussian process with $X(0)=0$, $\E[(X(1))^2]>0$, and continuous sample paths which is self-similar with parameter $\alpha/2>0$. 
\end{assumption}

The assumption $\E[(X(1))^2]>0$ excludes the trivial case that $X(t)=0$ for all $t\ge0$ with probability one.

The assumption of self-similarity with parameter $\alpha/2>0$ means that $\{X(ct)\}_{t\ge0}$ and $\{c^{\alpha/2}X(t)\}_{t\ge0}$ have the same probability law for all $c>0$. Therefore, the mean-squared displacement of a process $X$ satisfying Assumption~\ref{assumption} is
\begin{align}\label{eq:msd1}
    \E[(X(t))^2]
    =2D\tc(t/\tc)^\alpha,\quad t\ge0,
\end{align}
where $\tc>0$ is any fixed time and
\begin{align*}
    D=\frac{\E[(X(\tc))^2]}{2\tc}>0.
\end{align*}
Depending on the parameter $\alpha>0$, the process $X$ can thus be subdiffusive ($\alpha\in(0,1)$), normal diffusive ($\alpha=1$), superdiffusive ($\alpha\in(1,2)$), ballistic ($\alpha=2$), or hyperballistic ($\alpha>2$).

\subsection{\label{sec:examples}Example processes}

There are many processes $X$ which satisfy Assumption~\ref{assumption}. We now describe three examples \cite{lim2002self}. In the following, let $W=\{W(t)\}_{t\in\R}$ denote a standard Brownian motion (i.e.\ satisfying $\E[(W(t))^2]=t$ for $t\ge0$). 

One example satisfying Assumption~\ref{assumption} is the following time change of $W$,
\begin{align}\label{eq:sbm}
    X_{\text{sBm}}(t)
    =\sqrt{2D}\,W(\tc(t/\tc)^\alpha),
\end{align}
which is called a scaled Brownian motion (sBm) \cite{lim2002self}. Note that $X_{\text{sBm}}$ is well-defined for all $\alpha>0$. Note also that, owing to self-similarity of Brownian motion, we have the following equality in distribution,
\begin{align*}
    X_{\text{sBm}}(t)
    =\sqrt{2D}\,W(\tc(t/\tc)^\alpha)
    =_\text{dist}\sqrt{2(t/\tc)^{\alpha-1}D}\,W(t),
\end{align*}
and therefore $X_{\text{sBm}}(t)$ can be understood as a Brownian motion with the following time-dependent diffusivity,
\begin{align}\label{eq:Doft}
    D(t)=(t/\tc)^{\alpha-1}D.
\end{align}

Another example satisfying Assumption~\ref{assumption}, which is a canonical model of anomalous diffusion, is a Riemann-Liouville fractional Brownian motion (RLfBm), which has the following representation,
\begin{align}\label{eq:RLfBm}
    X_{\text{RLfBm}}(t)
    &=\sqrt{\alpha}\sqrt{2 D\tc^{1-\alpha}}\int_0^t(t-s)^{\alpha/2-1/2}\,\dd W(s),
\end{align}
where the multiplicative factor $\sqrt{\alpha}$ ensures that $X_{\text{RLfBm}}$ has the mean-squared displacement in \eqref{eq:msd1}. 
Note that $X_{\text{RLfBm}}$ is well-defined for all $\alpha>0$.

Another canonical example satisfying Assumption~\ref{assumption} is a fractional Brownian motion (fBm), which has the following representation \cite{mandelbrot1968fractional}, 
\begin{align}\label{eq:fBm}
\begin{split}
    &X_{\text{fBm}}(t)
    =c_\alpha\sqrt{2 D\tc^{1-\alpha}}\bigg(\int_0^t(t-s)^{\alpha/2-1/2}\,\dd W(s)\\
    &+\int_{-\infty}^0\Big[(t-s)^{\alpha/2-1/2}-(-s)^{\alpha/2-1/2}\Big]\,\dd W(s)\bigg),
\end{split}    
\end{align}
where the multiplicative factor $c_\alpha$ ensures that $X_{\text{fBm}}$ has the mean-squared displacement in \eqref{eq:msd1} and is given by
\begin{align*}
    c_\alpha
    =\frac{\sqrt{\sin(\pi\alpha/2)\Gamma(\alpha+1)}}{\Gamma((\alpha+1)/2)},
\end{align*}
where $\Gamma(\cdot)$ is the gamma function. 
Unlike RLfBm, the parameter $\alpha$ is restricted to the interval $\alpha\in(0,2)$ for fBm. Note that RLfBm and fBm are commonly parameterized by the so-called Hurst exponent $H=\alpha/2$.

The probability density function $p(x,t)$ of sBm, RLfBm, and fBm is the same Gaussian function in \eqref{eq:pg2}. Hence, sBm, RLfBm, and fBm have the same Fokker-Planck equation with the time-dependent diffusivity in \eqref{eq:Doft},
\begin{align*}
    \frac{\partial}{\partial t}p(x,t)
    =(t/\tc)^{\alpha-1}D\frac{\partial^2}{\partial x^2}p(x,t),\quad x\in\R,\,t>0,
\end{align*}
and Dirac initial condition $p(x,0)=\delta(x)$. Nevertheless, we emphasize that sBm, RLfBm, and fBm are three distinct stochastic processes, unless $\alpha=1$, in which case sBm, RLfBm, and fBm each reduce to Brownian motion with diffusivity $D>0$. For instance, if $\alpha\neq1$, then only sBm is Markovian and only fBm has stationary increments.

\subsection{Short-time FPT asymptotics}

Suppose $X=\{X(t)\}_{t\ge0}$ satisfies Assumption~\ref{assumption}. Since $X(t)$ is Gaussian, the probability that $X(t)$ exceeds some $L>0$ satisfies
\begin{align}\label{eq:gim}
    \P(X(t)\ge L)
    \sim\sqrt{\frac{\sigma^2t^{\alpha}}{2\pi L^2}}\exp\Big(-\frac{L^2}{2\sigma^2t^{\alpha}}\Big)\quad\text{as }\frac{t^{\alpha}}{L^2}\to0^+,
\end{align}
where $\sigma^2=2D\tc^{1-\alpha}$, which is a direct consequence of \eqref{eq:pg2}. Let $\tau$ be the FPT to $L>0$,
\begin{align}\label{eq:tau1}
    \tau
    =\inf\{t\ge0:X(t)\ge L\}.
\end{align}
The following proposition states that $\tau$ has the same asymptotics as \eqref{eq:gim} on a logarithmic scale.
\begin{proposition}\label{prop:key}
If $X=\{X(t)\}_{t\ge0}$ satisfies Assumption~\ref{assumption} and $\tau$ is the FPT in \eqref{eq:tau1}, then
\begin{align}\label{eq:key}
\begin{split}
    \lim_{t\to0^+}t^{\alpha}\ln\P(\tau\le t)
    &=-\frac{L^2}{2\sigma^2}
    =-\tc^\alpha\frac{L^2}{4D\tc}<0.
\end{split}    
\end{align}    
\end{proposition}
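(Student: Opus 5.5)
The plan is to sandwich $\P(\tau\le t)$ between two quantities that both have the logarithmic asymptotics of \eqref{eq:gim}. For the lower bound, note that $\{X(t)\ge L\}\subseteq\{\tau\le t\}$, since $X(0)=0<L$ and the paths are continuous. Hence
\begin{align*}
    \P(\tau\le t)\ge\P(X(t)\ge L).
\end{align*}
Combining this with \eqref{eq:gim}, the prefactor $\sqrt{\sigma^2t^\alpha/(2\pi L^2)}$ only contributes $t^\alpha\ln(\cdot)\to0$. It follows that $\liminf_{t\to0^+}t^\alpha\ln\P(\tau\le t)\ge -L^2/(2\sigma^2)$.

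For the upper bound, first write $\{\tau\le t\}=\{\sup_{0\le s\le t}X(s)\ge L\}$, which again uses path continuity. Self-similarity then gives
\begin{align*}
    \P\Big(\sup_{0\le s\le t}X(s)\ge L\Big)=\P\Big(\sup_{0\le s\le 1}X(s)\ge Lt^{-\alpha/2}\Big).
\end{align*}
Set $M=\sup_{0\le s\le1}X(s)$ and $u=Lt^{-\alpha/2}\to\infty$. On $[0,1]$, $X$ is a centered Gaussian process with continuous, hence a.s.\ bounded, paths. Its maximal variance there is $\sup_{s\le1}\E[X(s)^2]=\sup_{s\le1}\sigma^2 s^\alpha=\sigma^2$, attained at $s=1$. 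By the Borell--TIS inequality (or the Landau--Shepp/Fernique tail estimate), $\E[M]<\infty$ and
\begin{align*}
    \P(M\ge u)\le\exp\Big(-\frac{(u-\E[M])^2}{2\sigma^2}\Big)\quad\text{for }u>\E[M].
\end{align*}
Substituting $u=Lt^{-\alpha/2}$ and multiplying by $t^\alpha$ gives
\begin{align*}
    t^\alpha\ln\P(\tau\le t)\le-\frac{(L-t^{\alpha/2}\E[M])^2}{2\sigma^2}\to-\frac{L^2}{2\sigma^2}.
\end{align*}
This is the matching limsup, and the two bounds together give \eqref{eq:key}. The second equality in \eqref{eq:key} is just $\sigma^2=2D\tc^{1-\alpha}$.

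The main obstacle is the upper bound, namely controlling the supremum rather than a single marginal. The key inputs are the finiteness of $\E[M]$, which follows from Fernique/Dudley given continuous sample paths on the compact interval $[0,1]$, and the Gaussian concentration (Borell--TIS) with variance proxy equal to the maximal pointwise variance. Self-similarity is what makes the maximal variance equal to $\sigma^2$ exactly, achieved at the endpoint. If one wished to avoid citing Borell--TIS, Landau--Shepp's result $\lim_{u\to\infty}u^{-2}\ln\P(M\ge u)=-1/(2\sigma^2)$ gives the limit directly.
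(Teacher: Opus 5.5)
Your proof is correct and takes essentially the same route as the paper: self-similarity reduces $\P(\tau\le t)$ to the tail $\P(\sup_{[0,1]}X\ge Lt^{-\alpha/2})$ of a Gaussian supremum over a fixed interval, and the logarithmic asymptotics of that tail are governed by the maximal variance $\sigma^2=\E[X(1)^2]$. The only difference is that the paper cites this large-deviation limit directly (Lifshits, p.~59), whereas you derive it from the marginal lower bound and the Borell--TIS upper bound, which makes your argument more self-contained.
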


Proposition~\ref{prop:key} is an intuitive consequence of \eqref{eq:gim} (the proof is in Appendix~\ref{sec:proofs}). Indeed, the fact that \eqref{eq:key} holds with $\P(\tau\le t)$ replaced by $\P(X(t)\ge L)$ follows from \eqref{eq:gim}. Furthermore, the inequality $\P(\tau\le t)\ge\P(X(t)\ge L)$ is immediate, and the probability that $X$ first hits $L$ and then backtracks substantially below $L$ before time $t$ is unlikely as $t^\alpha/L^2$ vanishes.

We emphasize that \eqref{eq:key} is a general result that holds for all processes $X$ satisfying Assumption~\ref{assumption}. In particular, \eqref{eq:key} holds for sBm, RLfBm, and fBm defined in section~\ref{sec:examples}. We show in the next subsection that the logarithmic asymptotic in \eqref{eq:key} yields all the moments of the fFPT out of $N\gg1$ independent and identically distributed (iid) searchers.

\subsection{Leading order fFPTs}

We now give a general theorem which yields all the fFPT moments of any process satisfying Assumption~\ref{assumption}. The proof is in Appendix~\ref{sec:proofs}. Throughout this paper, $f\sim g$ means $\lim f/g=1$.

\begin{theorem}\label{thm:uni}
Assume that $\{\tau_{n}\}_{n\ge1}$ is a sequence of iid realizations of a random variable $\tau$ satisfying
\begin{align}\label{eq:conditionb}
\lim_{t\to0^+}t^{\alpha}\ln\P(\tau\le t)=-C<0,
\end{align}
where $\alpha>0$ and $C>0$. Define
\begin{align}\label{eq:TN1}
    T_N:=\min\{\tau_1,\dots,\tau_N\},
\end{align}
and assume that
\begin{align}\label{eq:conditiona}
\E[T_{N}]
<\infty\quad\text{for some $N\ge1$}.
\end{align}
Then for any $m\ge1$,
\begin{align}\label{eq:res}
\E[(T_{N})^{m}]
\sim\Big(\frac{C}{\ln N}\Big)^{m/\alpha}
\quad\text{as }N\to\infty.
\end{align}
\end{theorem}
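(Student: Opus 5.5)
Write $F(t)=\P(\tau\le t)$, so that $\P(T_N>t)=(1-F(t))^N$ and
\begin{align*}
\E[(T_N)^m]=\int_0^\infty m t^{m-1}(1-F(t))^N\,\dd t .
\end{align*}
I plan to prove the limit by separate upper and lower bounds. The natural scale is $t_N:=(C/\ln N)^{1/\alpha}$, so I fix a small $\delta\in(0,1)$ and compare with $(1\pm\delta)t_N$. By \eqref{eq:conditionb}, there is $t_\delta>0$ such that
\begin{align*}
\exp\bigl(-C(1+\eta)t^{-\alpha}\bigr)\le F(t)\le \exp\bigl(-C(1-\eta)t^{-\alpha}\bigr)\quad\text{for }0<t\le t_\delta,
\end{align*}
where $\eta=\eta(\delta)>0$ is a small constant chosen below.

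\emph{Lower bound.} By Markov's inequality in reverse,
\begin{align*}
\E[(T_N)^m]\ge ((1-\delta)t_N)^m\,\P(T_N>(1-\delta)t_N),
\end{align*}
so it suffices to show $NF((1-\delta)t_N)\to0$; then $(1-F)^N\to1$. Inserting the upper bound on $F$ gives
\begin{align*}
NF((1-\delta)t_N)\le N\cdot N^{-(1-\eta)(1-\delta)^{-\alpha}}\to0,
\end{align*}
provided $\eta$ is chosen so that $(1-\eta)(1-\delta)^{-\alpha}>1$. Sending $\delta\to0$ yields $\liminf \E[(T_N)^m]/t_N^m\ge1$.

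\emph{Upper bound.} I split the integral at $(1+\delta)t_N$. The piece over $[0,(1+\delta)t_N]$ is at most $((1+\delta)t_N)^m$. On $[(1+\delta)t_N,\infty)$, I use $(1-F(t))^N\le (1-F((1+\delta)t_N))^{N-N_0}(1-F(t))^{N_0}$, where $N_0$ is the index from \eqref{eq:conditiona}. Integrating, the tail piece is bounded by
\begin{align*}
\bigl(1-F((1+\delta)t_N)\bigr)^{N-N_0}\,\E[(T_{N_0})^m].
\end{align*}

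The main obstacle is that \eqref{eq:conditiona} only gives a finite first moment, whereas this bound needs a finite $m$th moment. I would first show that $\E[T_{N_0}]<\infty$ implies $\E[(T_{kN_0})^m]<\infty$ for $k\ge m$. Since $T_{kN_0}$ is the minimum of $k$ iid copies of $T_{N_0}$,
\begin{align*}
\P(T_{kN_0}>t)=\P(T_{N_0}>t)^k\le (\E[T_{N_0}]/t)^k .
\end{align*}
Hence $\int_0^\infty m t^{m-1}\P(T_{kN_0}>t)\,\dd t<\infty$ once $k>m$, and I replace $N_0$ by such a $kN_0$.

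It then remains to show that $(1-F((1+\delta)t_N))^{N-N_0}$ decays faster than $t_N^m$, which is only logarithmically small. Using the lower bound on $F$,
\begin{align*}
F((1+\delta)t_N)\ge N^{-(1+\eta)(1+\delta)^{-\alpha}}=N^{-\beta}\quad\text{with }\beta<1
\end{align*}
for $\eta$ small. Therefore $(1-F)^{N-N_0}\le\exp(-(N-N_0)N^{-\beta})$, which decays like $\exp(-cN^{1-\beta})$ and is $o((\ln N)^{-m/\alpha})$. This gives $\limsup \E[(T_N)^m]/t_N^m\le(1+\delta)^m$, and letting $\delta\to0$ completes the proof.
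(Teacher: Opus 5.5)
Your proof is correct, and it takes a genuinely different route from the paper's.

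\textbf{The paper's route.} It writes $\E[(T_N)^m]=\int_0^\infty S(u^{1/m})^N\,\dd u$ and sandwiches $S$ near $0$ between $1-e^{-C_\pm/t^\alpha}$. It then relies on Lemma~\ref{lem:integralestimate}, an explicit asymptotic evaluation of the model integral $\int_0^\delta(1-e^{-C/t^\alpha})^N\,\dd t$. That lemma uses the substitution $u=ANe^{-1/s}$, the bounds \eqref{logbounds}, and a three-piece splitting of the resulting integral. The tail beyond $\delta$ is killed by the factor $S(\delta^{1/m})^N$ times a finite moment of $T_{N_1}$ with $N_1=2^{m-1}N_0$; the paper states this finiteness without detail.

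\textbf{Your route.} You never evaluate an integral asymptotically. You show directly that $T_N$ concentrates at $t_N=(C/\ln N)^{1/\alpha}$: on one side $NF((1-\delta)t_N)\to0$, and on the other $\P(T_N>(1+\delta)t_N)\le\exp(-cN^{1-\beta})$. You convert this into moment asymptotics with a Markov-type lower bound and a tail bound that is effectively a uniform-integrability argument.

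\textbf{Comparison.} Your argument is shorter and more elementary, and it works verbatim for non-integer $m\ge1$. It also makes explicit why a finite first moment of $T_{N_0}$ gives a finite $m$th moment of $T_{kN_0}$: this is the Markov bound $\P(T_{kN_0}>t)\le(\E[T_{N_0}]/t)^k$. In exchange, the paper's lemma isolates a standalone asymptotic for the model integral, with quantitative error terms visible in its proof.

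One small slip: you first write ``for $k\ge m$.'' The Markov bound only makes $t^{m-1}\P(T_{kN_0}>t)$ integrable at infinity when $k>m$; at $k=m$ you get $\int^\infty t^{-1}\,\dd t$. You state the correct condition $k>m$ one line later, so nothing breaks.
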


If $\tau$ is the FPT in \eqref{eq:tau1} and $X=\{X(t)\}_{t\ge0}$ satisfies Assumption~\ref{assumption}, then \eqref{eq:key} implies that \eqref{eq:conditionb} holds with
\begin{align*}
    C
    =\tc^\alpha\frac{L^2}{4D\tc}>0.
\end{align*}
Furthermore, if we assume that $\P(\tau>t)$ decays no slower than polynomially as $t\to\infty$, then \eqref{eq:conditiona} holds. For self-similar processes $X$, one generally expects that \cite{aurzada2022asymptotics}
\begin{align}\label{eq:persistence}
    \P(\tau>t)
    \propto t^{-\theta+o(1)}\quad\text{for $\theta>0$ as }t\to\infty,
\end{align}
where $\theta>0$ is termed the persistence exponent. If \eqref{eq:persistence} holds with $\theta>0$, then \eqref{eq:conditiona} holds for all $N\ge 1/\theta+1$. We have thus obtained the following corollary of Proposition~\ref{prop:key} and Theorem~\ref{thm:uni}.

\begin{corollary}\label{cor:decay}
    If $X=\{X(t)\}_{t\ge0}$ satisfies Assumption~\ref{assumption} and its FPT $\tau$ in \eqref{eq:tau1} satisfies \eqref{eq:persistence}, then for any $m\ge1$, the $m$th moment of $T_N$ in \eqref{eq:TN1} decays as
\begin{align}\label{eq:uniimplication}
        \E[(T_N)^m]
    \sim
    (\tc)^m\bigg(\frac{L^2}{4D\tc\ln N}\bigg)^{m/\alpha}\quad\text{as }N\to\infty.
\end{align}
\end{corollary}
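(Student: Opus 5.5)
The corollary is essentially Theorem~\ref{thm:uni} applied to the first passage time $\tau$ in \eqref{eq:tau1}, so the plan is to check its two hypotheses \eqref{eq:conditionb} and \eqref{eq:conditiona} and then rewrite the constant.

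\textbf{Hypothesis \eqref{eq:conditionb}.} Proposition~\ref{prop:key} gives this directly with
\begin{align*}
C=\frac{L^2}{2\sigma^2}=\tc^\alpha\frac{L^2}{4D\tc}>0,
\qquad \sigma^2=2D\tc^{1-\alpha}.
\end{align*}

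\textbf{Hypothesis \eqref{eq:conditiona}.} This is the one step that uses the persistence assumption \eqref{eq:persistence}. Fix some $\theta'\in(0,\theta)$. By \eqref{eq:persistence} there is a $t_0$ such that $\P(\tau>t)\le t^{-\theta'}$ for all $t\ge t_0$. Since the $\tau_n$ are iid,
\begin{align*}
\P(T_N>t)=\P(\tau>t)^N.
\end{align*}
Because $T_N\ge0$, we can write
\begin{align*}
\E[T_N]=\int_0^\infty \P(\tau>t)^N\,\dd t\le t_0+\int_{t_0}^\infty t^{-N\theta'}\,\dd t.
\end{align*}
The last integral is finite as soon as $N\theta'>1$. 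So any integer $N>1/\theta$, after choosing $\theta'$ close enough to $\theta$, gives $\E[T_N]<\infty$. In particular this holds for $N\ge 1/\theta+1$, which is all that \eqref{eq:conditiona} asks for.

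\textbf{Conclusion.} Theorem~\ref{thm:uni} now yields
\begin{align*}
\E[(T_N)^m]\sim\Big(\frac{C}{\ln N}\Big)^{m/\alpha}.
\end{align*}
Substituting the value of $C$,
\begin{align*}
\Big(\frac{\tc^\alpha L^2}{4D\tc\ln N}\Big)^{m/\alpha}=\tc^m\Big(\frac{L^2}{4D\tc\ln N}\Big)^{m/\alpha},
\end{align*}
which is exactly \eqref{eq:uniimplication}.

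The only point needing care is the meaning of the ``$\propto t^{-\theta+o(1)}$'' in \eqref{eq:persistence}: the argument needs an eventual upper bound by $t^{-\theta'}$ for some $\theta'>0$. This follows from the standard reading, $\ln\P(\tau>t)/\ln t\to-\theta$. Under that reading, nothing beyond the earlier results is needed.
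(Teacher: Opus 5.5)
Your proposal is correct and follows the paper's own route: Proposition~\ref{prop:key} gives \eqref{eq:conditionb} with $C=\tc^\alpha L^2/(4D\tc)$, the persistence bound \eqref{eq:persistence} gives \eqref{eq:conditiona}, and Theorem~\ref{thm:uni} plus substitution of $C$ yields \eqref{eq:uniimplication}. Your explicit tail-integral estimate fills in a step the paper only asserts, and it gives the slightly sharper threshold $N>1/\theta$ in place of the paper's $N\ge 1/\theta+1$.
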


It is straightforward to check that \eqref{eq:persistence} holds for the sBm in \eqref{eq:sbm} with $\theta=\alpha/2$. In addition, it is well-known that \eqref{eq:persistence} holds for fBm with $\theta=1-\alpha/2>0$ for $\alpha\in(0,2)$. Similarly, \eqref{eq:persistence} holds for RLfBm for some $\alpha$-dependent $\theta=\theta(\alpha)>0$ if $\alpha\in(0,2)$, though an explicit formula for $\theta(\alpha)$ is unknown for RLfBm \cite{aurzada2022asymptotics}. 

Therefore, if $\alpha\in(0,2)$ and $X$ is the sBm in \eqref{eq:sbm}, the RLfBm in \eqref{eq:RLfBm}, or the fBm in \eqref{eq:fBm} (and thus has mean-squared displacement in \eqref{eq:msd1}), then the fFPT moments decay according to \eqref{eq:uniimplication}. 
Hence, fFPTs decay faster for subdiffusion ($\alpha\in(0,1)$) than for normal diffusion ($\alpha=1$) or superdiffusion ($\alpha\in(1,2)$) for sufficiently large $N$.

\subsection{More detailed fFPT asymptotics}

Roughly speaking, \eqref{eq:key} means that
\begin{align*}
    \P(\tau\le t)
    \approx At^pe^{-L^2/(2\sigma^2 t^\alpha)}\quad\text{if }2\sigma^2t^\alpha/L^2\ll 1,
\end{align*}
for some constants $A>0$ and $p\in\R$. We are not aware of the values of $A$ and $p$ for RLfBm or fBm. However, if $\tau_{\text{Bm}}$ denotes the FPT in \eqref{eq:tau1} for the normal diffusion case of $\alpha=1$, then it is well-known that
\begin{align*}
    \P(\tau_{\text{Bm}}\le t)
    =\textup{erf}\bigg(\sqrt{\frac{L^2}{4Dt}}\bigg)
    \sim At^pe^{-C/t}\quad\text{as }4Dt/L^2\to0,
\end{align*}
where $A=\sqrt{4D/(\pi L^2)}$, $p=1/2$, and $C=L^2/(4D)$. Therefore, if $\tau_{\text{sBm}}$ denotes the FPT in \eqref{eq:tau1} for sBm for a general $\alpha>0$, then as $4D\tc(t/\tc)^\alpha/L^2\to0$,
\begin{align}
\begin{split}\label{eq:moredetailed}
    &\P(\tau_{\text{sBm}}\le t)
    =\textup{erf}\bigg(\sqrt{\frac{L^2}{4D\tc(t/\tc)^\alpha}}\bigg)
    \sim At^p e^{-C/t^\alpha},
\end{split}    
\end{align}
where
\begin{align}\label{eq:sBmvalues}
    A=\tc^{1/2-\alpha/2}\sqrt{\frac{4D}{\pi L^2}},\quad p=\frac{\alpha}{2},\quad C=\tc^{\alpha-1}\frac{L^2}{4D}.
\end{align}

The following theorem uses the linear asymptotics in \eqref{eq:moredetailed} (rather than the mere logarithmic asymptotic in \eqref{eq:key}) to obtain the full probability distribution and higher order moment asymptotics for the fFPT $T_N$. The proof is identical to the proof of Theorem~9 in \cite{lawley2020sub}.

\begin{theorem}\label{thm:gumbel}
Let $\{\tau_{n}\}_{n\ge1}$ be a sequence of iid realizations of any random variable $\tau$ satisfying
\begin{align*}
\P(\tau\le t)
\sim At^{p}e^{-C/t^{\alpha}}\quad\text{as }t\to0^+,
\end{align*}
for some constants $C>0$, $A>0$, $\alpha>0$, and $p\in\R$. If $T_{N}:=\min\{\tau_{1},\dots,\tau_{N}\}$, then for all $x\in\R$,
\begin{align}\label{cd}
    \P(T_N>a_Nx+b_N)
    \to\exp(-e^x)\quad\text{as }N\to\infty,
\end{align}
where 
\begin{align}\label{abab}
\begin{split}
a_{N}
&=\frac{b_{N}}{\alpha\ln(AN)},\quad
b_{N}
=\Big(\frac{C}{\ln(AN)}\Big)^{1/\alpha},\quad\text{if }p=0,\\
a_{N}
&=\frac{b_{N}}{p(1+W)},\quad
b_{N}
=\Big(\frac{C\alpha}{pW}\Big)^{1/\alpha},\quad\text{if }p\neq0,
\end{split}
\end{align}
and
\begin{align}\label{dubdub}
W
&=
\begin{cases}
W_{0}\big((C\alpha/p)(AN)^{\alpha/p}\big) & \text{if }p>0,\\
W_{-1}\big((C\alpha/p)(AN)^{\alpha/p}\big) & \text{if }p<0,
\end{cases}
\end{align}
where $W_{0}(z)$ is the principal branch of the LambertW function and $W_{-1}(z)$ is the lower branch \cite{corless1996}.
\end{theorem}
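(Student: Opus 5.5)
The plan is to reduce the claim to the classical one-line computation for minima of iid variables,
\begin{align*}
\P(T_N>s)=\big(1-F(s)\big)^N,\qquad F(s):=\P(\tau\le s),
\end{align*}
with $s=s_N:=a_Nx+b_N$ for fixed $x\in\R$. If we show $s_N\to0^+$ and $N F(s_N)\to e^x$, then $F(s_N)\to0$, and hence
\begin{align*}
N\ln\big(1-F(s_N)\big)=-NF(s_N)\big(1+o(1)\big)\to-e^x,
\end{align*}
which is \eqref{cd}. Since $F(s)\sim As^pe^{-C/s^\alpha}$ as $s\to0^+$, it suffices, once $s_N\to0^+$ is known, to prove
\begin{align*}
NAs_N^p\exp\big(-C/s_N^\alpha\big)\to e^x .
\end{align*}

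First I will record how $b_N$ is chosen so that $NAb_N^pe^{-C/b_N^\alpha}\to1$.
\begin{itemize}
\item If $p=0$, then $C/b_N^\alpha=\ln(AN)$ exactly, so $NAe^{-C/b_N^\alpha}=1$.
\item If $p\neq0$, set $u=C/b_N^\alpha$, so that $b_N^p=(C/u)^{p/\alpha}$. The target equation $NA(C/u)^{p/\alpha}e^{-u}=1$ rearranges to
\begin{align*}
(\alpha u/p)\,e^{\alpha u/p}=(C\alpha/p)(AN)^{\alpha/p}.
\end{align*}
Hence $\alpha u/p=W$ with $W$ as in \eqref{dubdub}, i.e.\ $b_N=(C\alpha/(pW))^{1/\alpha}$.
\item The branch choice matters only to get $u\to+\infty$. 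For $p>0$ the argument of $W_0$ tends to $+\infty$, so $W_0\to+\infty$. For $p<0$ the argument is negative and tends to $0^-$, so the lower branch $W_{-1}\to-\infty$, which makes $u=pW/\alpha\to+\infty$.
\end{itemize}
In all cases $b_N\to0^+$, and the displayed identity holds exactly, not just asymptotically.

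Next I handle the perturbation. Write $s_N=b_N(1+\delta_N)$ with $\delta_N=a_Nx/b_N$. Then $\delta_N=x/(\alpha\ln(AN))$ when $p=0$, and $\delta_N=x/(p(1+W))$ when $p\neq0$; in both cases $\delta_N\to0$, so $s_N\to0^+$. Expanding,
\begin{align*}
\frac{C}{s_N^\alpha}=u(1+\delta_N)^{-\alpha}=u-\alpha u\delta_N+O(u\delta_N^2),\qquad s_N^p=b_N^p\big(1+o(1)\big).
\end{align*}
Multiplying the exact identity at $b_N$ by these factors gives
\begin{align*}
NAs_N^pe^{-C/s_N^\alpha}=\big(1+o(1)\big)\exp\big(\alpha u\delta_N+O(u\delta_N^2)\big).
\end{align*}

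The main obstacle is showing that $\alpha u\delta_N\to x$ and $u\delta_N^2\to0$, since this is where the specific form of $a_N$ enters.
\begin{itemize}
\item For $p=0$, $u=\ln(AN)$, so $\alpha u\delta_N=x$ exactly and $u\delta_N^2=O(1/\ln N)$.
\item For $p\neq0$, $\alpha u=pW$, so $\alpha u\delta_N=xW/(1+W)\to x$ because $|W|\to\infty$. Also $u\delta_N^2=O(1/|W|)\to0$.
\end{itemize}
The choice $1+W$ rather than $W$ in $a_N$ is harmless at this order. One only has to check that $1+W\neq0$ for large $N$, which holds because $|W|\to\infty$.

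This gives $NF(s_N)\to e^x$, which completes the argument. Throughout, the asymptotic $F\sim As^pe^{-C/s^\alpha}$ is applied along the sequence $s_N\to0^+$; this is legitimate because it is a limit statement as $s\to0^+$.
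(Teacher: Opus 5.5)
Your argument is correct and follows essentially the same route as the proof the paper defers to (Theorem~9 of \cite{lawley2020sub}): write $\P(T_N>a_Nx+b_N)=(1-F(a_Nx+b_N))^N$, choose $b_N$ via the Lambert $W$ identity so that $NAb_N^pe^{-C/b_N^\alpha}=1$ exactly, and verify $N\,\P(\tau\le a_Nx+b_N)\to e^x$. Your treatment of the branch choice, which guarantees $C/b_N^\alpha\to+\infty$, and of the error terms ($\alpha u\delta_N\to x$ and $u\delta_N^2\to0$) is complete.
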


We conclude this section with several comments about Theorem~\ref{thm:uni}. First, in light of the sBm asymptotics in \eqref{eq:moredetailed}-\eqref{eq:sBmvalues}, Theorem~\ref{thm:gumbel} applies to fFPTs of sBm (see section~\ref{sec:numericalsbm} and Figures~\ref{fig:SBMcd} and \ref{fig:SBM}).  

Second, the convergence in distribution in \eqref{cd} means that the distribution of the fFPT is approximately Gumbel with shape $b_{N}$ and scale $a_{N}$ in \eqref{abab},
\begin{align*}
\P(T_{N}>t)
\approx\exp\Big[-\exp\Big(\frac{t-b_{N}}{a_{N}}\Big)\Big]\quad\text{if }N\gg1.
\end{align*}

Third, the choice of $(a_N,b_N)$ in \eqref{abab} is not unique. Indeed, $(a_N,b_N)$ can be replaced by any $(a_N',b_N')$ satisfying
\begin{align*}
    \lim_{N\to\infty}a_N'/a_N=1,\quad \lim_{N\to\infty}(b_N-b_N')/a_N=0.
\end{align*}
The asymptotics of the LambertW function \cite{corless1996} thus imply that \eqref{cd} also holds for
\begin{align}
\begin{split}\label{ababuf}
a_{N}
&=\frac{C^{1/\alpha}}{\alpha(\ln N)^{1+1/\alpha}},\\
b_{N}
&=\Big(\frac{C}{\ln N}
+\frac{Cp\ln(\ln(N))}{\alpha(\ln N)^{2}}
-\frac{C\ln(AC^{p/\alpha})}{(\ln N)^{2}}\Big)^{1/\alpha}.
\end{split}
\end{align}

Finally, \eqref{cd} can be used to give explicit higher order approximations to all the moments of $T_{N}$ since essentially all of the statistical information about a Gumbel distribution is immediately available. For example, if \eqref{eq:conditiona} holds, then the mean and variance satisfy
\begin{align}
\E[T_{N}]
&=b_{N}-\gamma a_{N}+o(a_{N})\quad\text{as }N\to\infty,\label{eq:meang}\\
\textup{Var}(T_{N})
&=\frac{\pi^{2}}{6}a_{N}^{2}+o(a_{N}^{2})\quad\text{as }N\to\infty,\nonumber
\end{align}
where $\gamma\approx0.5772$ is the Euler-Mascheroni constant. Indeed, \eqref{ababuf} implies that
\begin{align*}
&\E[T_{N}]
=\Big(\frac{C}{\ln N}\Big)^{1/\alpha}\bigg\{\Big(1
+\frac{p\ln(\ln(N))}{\alpha\ln N}
-\frac{C\ln(AC^{p/\alpha})}{\ln N}\Big)^{1/\alpha}\\
&\qquad\qquad\qquad\qquad\qquad\qquad\qquad-\frac{\gamma }{\alpha\ln N}\bigg\}+\text{h.o.t.},
\end{align*}
where h.o.t.\ denotes higher order terms as $N\to\infty$. 
Notice that the leading order term in this expression agrees with Theorem~\ref{thm:uni}. The corrections give the higher order effects of $A$ and $p$.

\section{\label{sec:finite}Bounded speed anomalous diffusions}

We now consider the fFPTs of some anomalous diffusions which move at bounded speed. These processes are bounded speed analogs of the sBM in \eqref{eq:sbm} and the RLfBm in \eqref{eq:RLfBm}. These processes rely on the following bounded speed analog of a standard Brownian motion. For a dimensionless parameter $\eps>0$ and a rate $\lambda_0>0$, define
\begin{align}\label{eq:Weps}
    W_\eps(t)
    =\frac{\eps}{\sqrt{\lambda_0}}\int_0^{\lambda_0 t/\eps^2} J(s)\,\dd s,\quad t\ge0,
\end{align}
where $J(t)\in\{-1,1\}$ is a two-state, continuous-time Markov jump process that switches at unit rate,
\begin{align*}
    -1\Markov{1}{1}1.
\end{align*}
Note that $W_\eps$ has dimension $\sqrt{\mathrm{time}}$, which is the same as standard Brownian motion. 
The process $W_\eps$ in \eqref{eq:Weps} is a run-and-tumble process \cite{angelani2014first} (also called a velocity jump process \cite{dorsogna2026mean}, a dichotomous noise process \cite{bena2006}, a stochastic hybrid system \cite{hespanha2004stochastic}, or a piecewise-deterministic Markov process (PDMP) \cite{davis1984piecewise}). The process moves at a constant speed (with dimension $1/\sqrt{\mathrm{time}}$),
\begin{align*}
    \Big|\frac{\dd W_\eps(t)}{\dd t}\Big|
    =\eps^{-1}\sqrt{\lambda_0}|J(\lambda_0 t/\eps^2)|
    =\eps^{-1}\sqrt{\lambda_0}>0,
\end{align*}
and switches direction at rate
\begin{align*}
    \lambda:=\lambda_0/\eps^2.
\end{align*}
Using that $\E[J(s)J(s')]=e^{-2|s-s'|}$, the mean-squared displacement of $W_\eps$ is
\begin{align}
    \E[(W_\eps(t))^2]
    &=\frac{\eps^2}{\lambda_0}\int_0^{\lambda_0 t/\eps^2}\int_0^{\lambda_0 t/\eps^2} \E[J(s)J(s')]\,\dd s'\,\dd s\nonumber\\
    &=t-\frac{\eps^2}{2\lambda_0}(1-e^{-2\lambda_0 t/\eps^2}),\quad t\ge0.\label{eq:msdWeps} 
\end{align}
It is well-known that $W_\eps$ converges to a standard Brownian motion $W$ as $\eps$ vanishes \cite{billingsley2013},
\begin{align}\label{eq:donsker}
    W_\eps\to W\quad\text{as }\eps\to0.
\end{align}

\subsection{Bounded speed sBm}

We start with a bounded speed analog to the sBm in \eqref{eq:sbm}, defined by
\begin{align}\label{eq:finitesbm}
    X_\eps(t)
    =\sqrt{2D}\,W_\eps(\tc(t/\tc)^\alpha),
\end{align}
where $\alpha>0$ and $\eps>0$ are dimensionless, $\tc>0$ is some characteristic timescale, $W_\eps$ is in \eqref{eq:Weps}, and the diffusivity parameter $D$ compares a speed $v_0$ to the rate $\lambda_0$,
\begin{align}\label{eq:Dratio}
    D
    =\frac{v_0^2}{2\lambda_0}
    =\frac{v^2}{2\lambda}>0,
\end{align}
where
\begin{align*}
    v:=v_0/\eps,\quad \lambda:=\lambda_0/\eps^2.
\end{align*}
Put another way, if $Y_\eps(t)$ is a run-and-tumble process that moves at constant speed $v=v_0/\eps$ and switches direction at rate $\lambda=\lambda_0/\eps^2$, then $X_\eps$ in \eqref{eq:finitesbm} is the time-change $X_\eps(t)=Y_\eps(\tc(t/\tc)^\alpha)$. The time-dependent speed of $X_\eps$ is (using the relation $D=v_0^2/(2\lambda_0)$)
\begin{align*}
    \Big|\frac{\dd X_\eps(t)}{\dd t}\Big|
    &=\sqrt{2D}\Big|\frac{\dd W_\eps(\tc(t/\tc)^\alpha)}{\dd t}\Big|
    =\frac{\alpha v_0}{\eps}\Big(\frac{t}\tc\Big)^{\alpha-1}.
\end{align*}
It follows from \eqref{eq:msdWeps} that the mean-squared displacement of $X_\eps(t)$ is
\begin{align}\label{eq:msd7}
    \E[(X_\eps(t))^2]
    =2D\Big(\tc(t/\tc)^\alpha-\frac{\eps^2}{2\lambda_0}\big(1-e^{-2(\lambda_0/\eps^2)\tc(t/\tc)^\alpha}\big)\Big).
\end{align}
The convergence in \eqref{eq:donsker} implies that $X_\eps$ in \eqref{eq:finitesbm} becomes the sBm in \eqref{eq:sbm} as $\eps$ vanishes,
\begin{align}\label{eq:donskersbm}
    X_\eps
    \to X_{\text{sBm}}\quad\text{as }\eps\to0^+.
\end{align}

\subsection{Bounded speed RLfBm}

We now consider a bounded speed analog to the RLfBm in \eqref{eq:RLfBm}. These processes were introduced in Ref.~\cite{dean2021position} and their fFPTs were recently studied numerically in Ref.~\cite{grebenkov2026fastest}. For $\alpha>0$, define $X_\eps=\{X_\eps(t)\}_{t\ge0}$ by
\begin{align}\label{eq:finitespeedprocess}
    X_\eps(t)
    &=\sqrt{\alpha}\sqrt{2 D\tc^{1-\alpha}}\int_0^t (t-s)^{\alpha/2-1/2}\,\dd W_\eps(s),
\end{align}
where the diffusivity parameter $D$ is in \eqref{eq:Dratio}, $\tc>0$ is a characteristic timescale, and $W_\eps$ is in \eqref{eq:Weps}. 

The process $X_\eps$ is often multiplied by $1/\Gamma((\alpha+1)/2)$ \cite{dean2021position, grebenkov2026fastest}, but we omit this factor for simplicity. If $\alpha=1$, then the processes in \eqref{eq:finitesbm} and \eqref{eq:finitespeedprocess} both reduce to standard run-and-tumble processes that move at constant speed $v=v_0/\eps$ and switch directions at rate $\lambda=\lambda_0/\eps^2>0$. The mean-squared displacement of $X_\eps$ in \eqref{eq:finitespeedprocess} has been determined exactly in terms of the hypergeometric function \cite{dean2021position}, and at large time, it is given by
\cite{dean2021position}
\begin{align}\label{eq:msd8}
    \E[(X_\eps(t))^2]
    \sim 2D\tc(t/\tc)^\alpha\quad\text{as }t\to\infty.
\end{align}
The convergence in \eqref{eq:donsker} implies that $X_\eps$ in \eqref{eq:finitespeedprocess} becomes the RLfBm in \eqref{eq:RLfBm} as $\eps$ vanishes,
\begin{align}\label{eq:donskerrlfbm}
    X_\eps
    \to X_{\text{RLfBm}}\quad\text{as }\eps\to0^+.
\end{align}

\subsection{\label{sec:finitespeedfpts}FPTs}

If $X_\eps$ is the bounded speed process in either \eqref{eq:finitesbm} or \eqref{eq:finitespeedprocess}, consider the FPT to some positive length $L>0$,
\begin{align}\label{eq:taueps}
    \tau_\eps
    =\inf\{t\ge0:X_\eps(t)\ge L\},
\end{align}
and define the fFPT,
\begin{align}\label{eq:fFPTfinitesbm}
    T_{\eps,N}
    =\min\{\tau_{\eps,1},\dots,\tau_{\eps,N}\},
\end{align}
where $\{\tau_{\eps,n}\}_{n\ge1}$ is an iid sequence of realizations of $\tau_\eps$.

In light of \eqref{eq:donskersbm} and \eqref{eq:donskerrlfbm}, if we first take $\eps\to0$, and then take $N\to\infty$, then the analysis of section~\ref{sec:gaussian} implies that the fFPT moments decay according to
\begin{align}\label{eq:order}
    \lim_{N\to\infty}(\ln N)^{m/\alpha}\lim_{\eps\to0}\E[(T_{\eps,N})^m]=\bigg(\frac{L^2}{4D\tc^{1-\alpha}}\bigg)^{m/\alpha}.
\end{align}
However, since the speed of $X_\eps$ is bounded, the FPT is bounded away from zero. In particular,
\begin{align}\label{eq:lowerbound}
    T_{\eps,N}\ge\tmin>0,
\end{align}
for some minimal time $\tmin>0$, which contradicts \eqref{eq:order}. The minimal time is obtained by solving $X_\eps(\tmin)=L$ for $\tmin$ in the case $J(t)=1$ for all $t\in[0,\tmin]$. Hence, if we take $N\to\infty$ for fixed $\eps>0$, then the fFPT converges to 
\begin{align}\label{eq:epsfixed}
    \lim_{N\to\infty}T_{\eps,N}
    =\tmin\quad\text{almost surely if }\eps>0.
\end{align}
We now investigate the convergence in \eqref{eq:epsfixed}. 

\subsection{\label{sec:finitespeedsbmfpt}fFPT for bounded speed sBm}

For the bounded speed sBm in \eqref{eq:finitesbm}, the smallest that the FPT $\tau_\eps$ in \eqref{eq:taueps} can be is
\begin{align}\label{eq:tmin}
    \tmin
    :=\tc\Big(\frac{\eps L}{v_0\tc}\Big)^{1/\alpha}
    =\tc\Big(\frac{L}{v\tc}\Big)^{1/\alpha}.
\end{align}
Notice that the minimal search time $\tmin$ in \eqref{eq:tmin} is (a) an increasing function of $\alpha$ if $\tc v/L>1$ and (b) a decreasing function of $\alpha$ if $\tc v/L<1$. 

If $\sigma_\eps$ denotes the FPT in \eqref{eq:taueps} for the case $\alpha=1$, then it was shown in section~4.1 in Ref.~\cite{lawley2021pdmp} that
\begin{align*}
    \P(\sigma_\eps=L/v)
    &=q,\\
    \P(L/v<\sigma_{\eps}<(L/v)(1+\delta))
    &\sim(1-q)A_1\delta\quad\text{as }\delta\to0^+,
\end{align*}
where
\begin{align}\label{eq:qAsBm}
    q
    :=p_1e^{-\kappa}\in(0,1),\quad
    A_1
    :=\frac{\kappa(p_0+\kappa p_1)}{2(e^\kappa-p_1)}>0,
\end{align}
where $\P(J(0)=1)=p_1=1-p_0\in(0,1]$, and
\begin{align}\label{eq:kappa}
    \kappa
    =\frac{\lambda L}{v}
    =\frac{\lambda_0 L}{\eps v_0}>0.
\end{align}
Note that $q$ is the probability that the process starts in the positive direction and does not change direction before hitting $L$.

For the general case of $\alpha>0$, it follows that 
\begin{align*}
    \P(\tmin<\tau_{\eps}<\tmin(1+\delta))
    &=\P(L/v<\sigma_{\eps}<(L/v)(1+\delta)^\alpha)\\
    &\sim(1-q)A\delta\quad\text{as }\delta\to0^+,
\end{align*}
where $q$ is in \eqref{eq:qAsBm} and
\begin{align}\label{eq:A}
    A
    :=\alpha A_1
    =\alpha\frac{\kappa(p_0+\kappa p_1)}{2(e^\kappa-p_1)}>0.
\end{align}
We can then apply the theory of Ref.~\cite{lawley2021pdmp} to determine the large $N$ distribution and moments of the fFPT $T_{\eps,N}$.

Specifically, Theorem~2 in Ref.~\cite{lawley2021pdmp} implies
\begin{align*}
T_{\eps,N}
=\begin{cases}
    \tmin & \text{with prob.\ }1-(1-
    q)^N,\\
    \tmin(1+\psi_N/(AN)) & \text{with prob.\ }(1-q)^N,\\
\end{cases}
\end{align*}
where $A$ is in \eqref{eq:A} and $\psi_{N}>0$ converges in distribution to a unit mean exponential random variable, which means that for all $x\ge0$,
\begin{align}\label{eq:cdpdmp1}
    \P(\psi_N>x)
    \to e^{-x}\quad\text{as }N\to\infty.
\end{align}
Furthermore, it is known that $\P(\tau_{\eps}>t)$ decays as $t^{-1/2}$ as $t\to\infty$ if $\alpha=1$ \cite{malakar2018steady}, and therefore $\E[T_{\eps,N}]<\infty$ for all $N>2/\alpha$. Theorem~4 in Ref.~\cite{lawley2021pdmp} thus yields the following asymptotics for each moment $m\ge1$, 
\begin{align*}
\E[(T_{\eps,N}-\tmin)^{m}]
&\sim
(1-q)^{N}m!\Big(\frac{\tmin}{AN}\Big)^m\quad\text{as }N\to\infty.
\end{align*}
Hence, the mean and variance satisfy
\begin{align}
\E[T_{\eps,N}]
&= \tmin\bigg(1+\frac{(1-q)^{N}}{AN}\bigg)+\text{h.o.t.},\label{eq:mean1}\\
\textup{Var}(T_{\eps,N})
&=2(1-q)^N\Big(\frac{\tmin}{AN}\Big)^2+\text{h.o.t.},\nonumber
\end{align}
where h.o.t.\ denotes higher order terms as $N\to\infty$.

\subsection{\label{sec:finitespeedrlfbmfpt}fFPT for bounded speed RLfBm}

For the bounded speed RLfBm in \eqref{eq:finitespeedprocess}, the smallest that the FPT $\tau_\eps$ can be is
\begin{align}\label{eq:tminrlfbm}
    \tmin
    :=\tc\Big[\Big(\frac{\beta}{\sqrt{\alpha}}\Big)\Big(\frac{\eps L}{v_0\tc}\Big)\Big]^{1/\beta}
    =\tc\Big[\Big(\frac{\beta}{\sqrt{\alpha}}\Big)\Big(\frac{L}{v\tc}\Big)\Big]^{1/\beta},
\end{align}
where
\begin{align*}
    \beta
    =\alpha/2+1/2
    =(\alpha+1)/2.    
\end{align*}
A simple calculus and algebra exercise shows that the minimal search time $\tmin$ in \eqref{eq:tminrlfbm} is (a) an increasing function of $\alpha$ if $\tc v/L>\chi$ and (b) a decreasing function of $\alpha$ if $\tc v/L<\chi$, where
\begin{align*}
    \chi
    =\frac{2\sqrt{\alpha}}{1+\alpha}\exp\Big(\frac{\alpha-1}{2\alpha}\Big).
\end{align*}

It is immediate that 
\begin{align}\label{eq:qforrlfbm}
    \P(\tau_\eps=\tmin)
    =q
    :=p_1e^{-\lambda\tmin}\in(0,1),
\end{align}
which, again, is the probability that the process starts in the positive direction and does not change direction before hitting $L$. 
We show in Appendix~\ref{sec:shorttimeRLfbm} that
\begin{align*}
    \P(\tmin<\tau_{\eps}<\tmin(1+\delta))
    &\sim(1-q)A\delta^p\quad\text{as }\delta\to0^+,
\end{align*}
where the power $p\in(0,1]$ and prefactor $A$ are
\begin{align}\label{eq:pA}
\begin{split}
    p
    &=\begin{cases}
        1 &\text{if }\alpha\in(0,1],\\
        1/\beta &\text{if }\alpha>1,
    \end{cases}\\
    A
    &=\begin{cases}
        \displaystyle\Big(p_0+p_1\frac{\lambda \tmin}{2-\beta}\Big)\frac{\lambda\tmin}{2}\frac{e^{-\lambda\tmin}}{1-p_1e^{-\lambda\tmin}} &\text{if }\alpha\in(0,1],\\
        \displaystyle\frac{p_1\lambda \tmin e^{-\lambda \tmin}}{1-p_1 e^{-\lambda\tmin}}\Big(\frac{\beta}{2}\Big)^{1/\beta} &\text{if }\alpha>1.
    \end{cases}
\end{split}    
\end{align}
We can then apply the theory of Ref.~\cite{lawley2021pdmp} to determine the large $N$ distribution and moments of the fFPT $T_{\eps,N}$.

Specifically, Theorem~2 in Ref.~\cite{lawley2021pdmp} implies
\begin{align*}
T_{\eps,N}
=\begin{cases}
    \tmin & \text{with prob.\ }1-(1-
    q)^N,\\
    \displaystyle\tmin\Big(1+\frac{\psi_N}{(AN)^{1/p}}\Big) & \text{with prob.\ }(1-q)^N,\\
\end{cases}
\end{align*}
where $p\in\{1,1/\beta\}$ and $A$ are in \eqref{eq:pA} and $\psi_{N}>0$ converges in distribution to a Weibull random variable with unit scale and shape $p$, which means that for all $x\ge0$,
\begin{align}\label{eq:cdpdmp2}
    \P(\psi_N>x)
    \to e^{-x^p}\quad\text{as }N\to\infty.
\end{align}
Furthermore, if we assume that $\E[T_{\eps,N}]<\infty$ for some $N\ge1$, then Theorem~4 in Ref.~\cite{lawley2021pdmp} yields the following asymptotics for each moment $m\ge1$ as $N\to\infty$,
\begin{align*}
\E[(T_{\eps,N}-\tmin)^{m}]
&\sim(1-q)^{N}\Gamma(1+m/p)\Big(\frac{\tmin}{(AN)^{1/p}}\Big)^{m}.
\end{align*}
Hence, the mean and variance satisfy
\begin{align}
\E[T_{\eps,N}]
&= \tmin\bigg(1+\frac{(1-q)^{N}}{(AN)^{1/p}}\Gamma\Big(1+\frac{1}{p}\Big)\bigg)+\text{h.o.t.},\label{eq:mean2}\\
\textup{Var}(T_{\eps,N})
&=(1-q)^N\Gamma\Big(1+\frac{2}{p}\Big)\Big(\frac{\tmin}{(AN)^{1/p}}\Big)^2+\text{h.o.t.},\nonumber
\end{align}
where h.o.t.\ denotes higher order terms as $N\to\infty$.

\subsection{\label{sec:crossover}Estimating the crossover}

For the bounded speed processes $X_\eps$ in \eqref{eq:finitesbm} and \eqref{eq:finitespeedprocess}, the analysis in section~\ref{sec:gaussian} and the analysis in section~\ref{sec:finite} offer differing estimates for the fFPT $T_{\eps,N}$. For instance, section~\ref{sec:gaussian} suggests that the mean fFPT for $N\gg1$ is approximately,
\begin{align}\label{eq:approx1}
    \E[T_{\eps,N}]
    \approx\tc\bigg(\frac{L^2}{4D\tc\ln N}\bigg)^{1/\alpha},
\end{align}
whereas section~\ref{sec:finite} suggests
\begin{align}\label{eq:approx2}
    \E[T_{\eps,N}]
    \approx\tmin\bigg(1+\frac{(1-q)^{N}}{(AN)^{1/p}}\Gamma(1+1/p)\bigg),
\end{align}
where $q$, $A$, and $p$ are given in sections~\ref{sec:finitespeedsbmfpt}-\ref{sec:finitespeedrlfbmfpt}.

In order to estimate when the section~\ref{sec:gaussian} versus section~\ref{sec:finite} approximations are accurate, observe that the lower bound in \eqref{eq:lowerbound} implies that the accuracy of the estimate in \eqref{eq:approx1} requires
\begin{align*}
    \tc\bigg(\frac{L^2}{4D\tc\ln N}\bigg)^{1/\alpha}
    \gg\tmin,
\end{align*}
which is equivalent to
\begin{align*}
    N
    \ll\exp\Big[\frac{L^2}{4D\tc}\Big(\frac{\tc}{\tmin}\Big)^\alpha\Big]
    =\exp\Big[\frac{1}{2}\frac{\lambda L}{v}\frac{L}{v\tc}\Big(\frac{\tc}{\tmin}\Big)^\alpha\Big],
\end{align*}
where we used $D=v^2/(2\lambda)$. 
We therefore predict that the approximations of section~\ref{sec:gaussian} are accurate if 
\begin{align}\label{eq:suff}
    1
    \ll N
    \ll N_1
    :=\exp\Big[\frac{1}{2}\frac{\lambda L}{v}\frac{L}{v\tc}\Big(\frac{\tc}{\tmin}\Big)^\alpha\Big],
\end{align}
where the equality defines $N_1$. Furthermore, we predict that the approximations of section~\ref{sec:finite} are accurate if the leading order correction term to the mean in \eqref{eq:approx2} is small, which means
\begin{align*}
    N
    \gg N_2:=1/q
    =e^{\lambda\tmin}/p_1,
\end{align*}
where the equality defines $N_2$ and $q=\P(\tau_\eps=\tmin)=p_1e^{-\lambda\tmin}$ with $p_1=\P(J(0)=1)$.

For the bounded speed sBm $X_\eps$ in \eqref{eq:finitesbm}, the value of $\tmin$ in \eqref{eq:tmin} implies that $N_1$ is
\begin{align}\label{eq:N1sbm}
    N_{1}
    =\exp(\kappa/2),
\end{align}
where $\kappa=\lambda L/v=\eps^{-1}\lambda_0 L/v_0$ as in \eqref{eq:kappa}, and the value of $q$ in \eqref{eq:qAsBm} implies that $N_2$ is 
\begin{align}\label{eq:N2sbm}
    N_2
    =e^\kappa/p_1=(N_{1})^2/p_1.
\end{align}

For the bounded speed RLfBm $X_\eps$ in \eqref{eq:finitespeedprocess}, the value of $\tmin$ in \eqref{eq:tminrlfbm} implies that $N_1$ is
\begin{align*}
    N_{1}
    &=\exp\Big[\frac{\kappa}{2}\Big(\frac{\sqrt{\alpha}}{\beta}\Big)^{\alpha/\beta}\Big(\frac{L}{v\tc}\Big)^{(1-\alpha)/(1+\alpha)}\Big],
\end{align*}
and the value of $q$ in \eqref{eq:qforrlfbm} implies that $N_2$ is 
\begin{align*}
    N_2
    =(N_{1})^{(1+\alpha)^2/(2\alpha)}/p_1.
\end{align*}

Finally, since the mean fFPT $\E[T_{\eps,N}]$ is a monotonically decreasing function of $N$, a simple approximation for all $N\gg1$ is the maximum of the unbounded speed theory in \eqref{eq:approx1} and the leading order bounded speed theory in \eqref{eq:approx2}, i.e.
\begin{align*}
    \E[T_{\eps,N}]
    \approx\max\bigg\{\tc\bigg(\frac{L^2}{4D\tc\ln N}\bigg)^{1/\alpha},\tmin\bigg\}\quad\text{if }N\gg1.
\end{align*}

\section{\label{sec:numerics}Numerical simulations}

We now illustrate some of our results using numerical simulations. Throughout this section, we take
\begin{align*}
    D=v_0=L=\tc=1,\quad \lambda_0=1/2.
\end{align*}
The simulation details are in Appendix~\ref{sec:numericaldetails}.

\subsection{\label{sec:numericalsbm}sBm}

We start with the sBm in \eqref{eq:sbm} and its bounded speed analog in \eqref{eq:finitesbm}.

Figure~\ref{fig:SBMcd} plots the convergence in distribution implied by Theorem~\ref{thm:gumbel} for $\alpha=1$, $\alpha=0.75$, and $\alpha=1.5$ in the left, middle, and right plots, respectively. In particular, the solid black curve is the limiting survival probability $\exp(-e^x)$, and the circle markers are the distribution of $(T_N-b_N)/a_N$ for sBm where $(a_N,b_N)$ are given in Theorem~\ref{thm:gumbel} in terms of the LambertW function. Furthermore, the plus markers are the distribution of $(T_{\eps,N}-b_N)/a_N$, where $T_{\eps,N}$ is the fFPT in \eqref{eq:fFPTfinitesbm} of the bounded speed process in \eqref{eq:finitesbm} with $\eps=10^{-3}$. This plot shows excellent agreement between theory and simulations for both the unbounded speed sBm and its bounded speed analog.

\begin{figure*}
\centering
\includegraphics[width=1\linewidth]{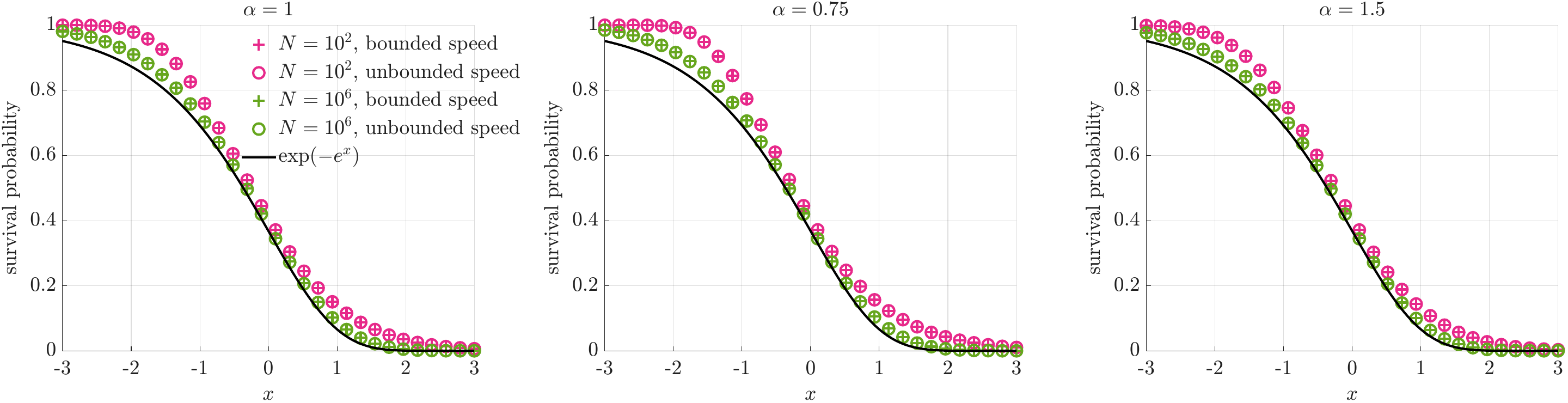}
\caption{Convergence in distribution of fFPTs of the unbounded speed sBm in \eqref{eq:sbm} and its bounded speed analog in \eqref{eq:finitesbm}.}
\label{fig:SBMcd}
\end{figure*}

\begin{figure*}
\centering
\includegraphics[width=.9\linewidth]{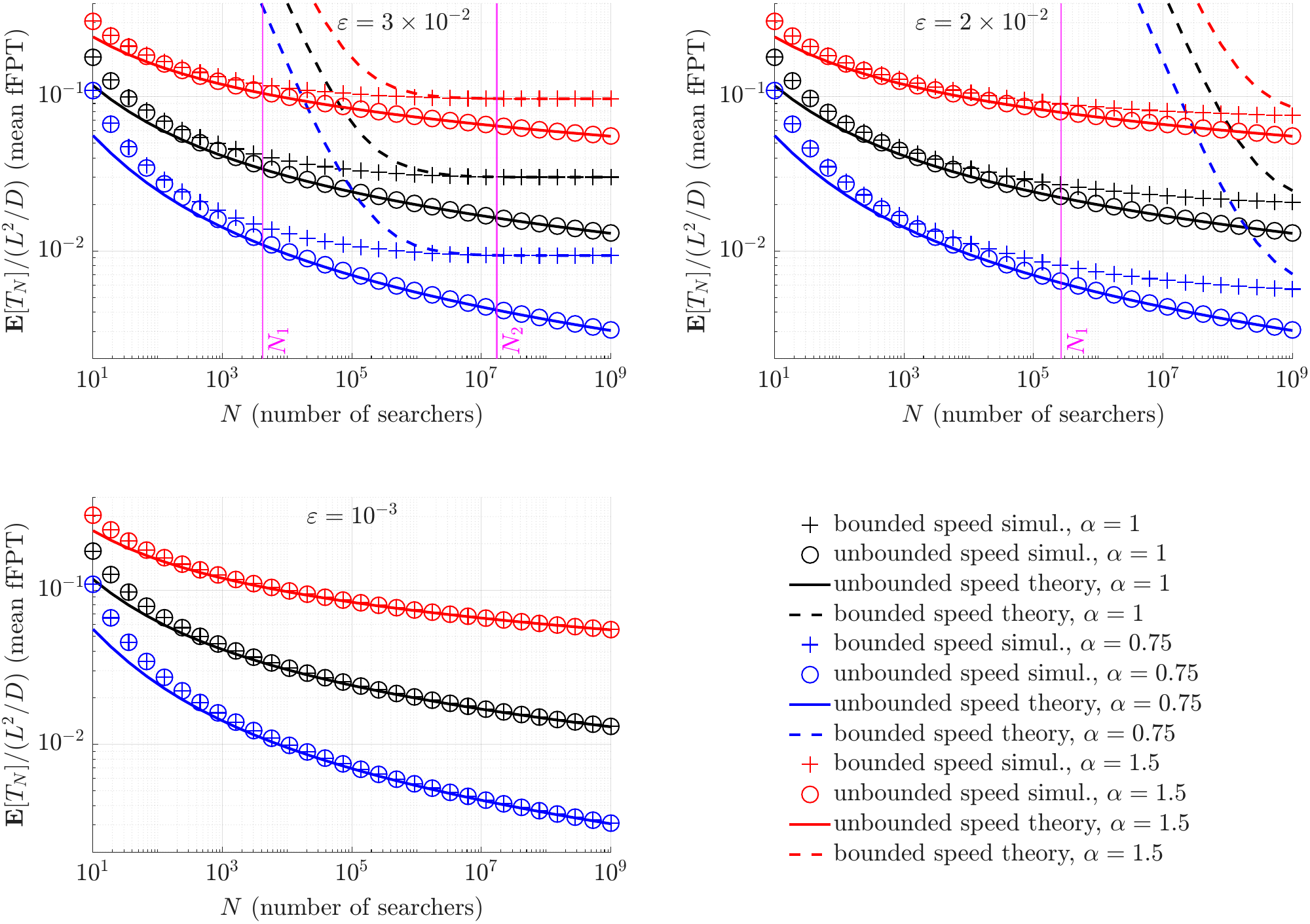}
\caption{Decay of the mean fFPT as $N$ increases for the unbounded speed sBm in \eqref{eq:sbm} (circle markers) and the bounded speed sBm in \eqref{eq:finitesbm} (plus markers) for different values of $\eps$ in the three panels (the circle markers are the same in each panel). The three colors correspond to subdiffusion (blue), normal diffusion (black), and superdiffusion (red). The solid curves are the estimates to the mean fFPT given in \eqref{eq:meang} where $a_N$ and $b_N$ are in Theorem~\ref{thm:gumbel} with $A$, $p$, and $C$ in \eqref{eq:sBmvalues}. The dashed curves are the mean estimate in \eqref{eq:mean1}. The pink vertical bars are $N_1$ in \eqref{eq:N1sbm} and $N_2$ in \eqref{eq:N2sbm}.}
\label{fig:SBM}
\end{figure*}

\begin{figure*}
\centering
\includegraphics[width=1\linewidth]{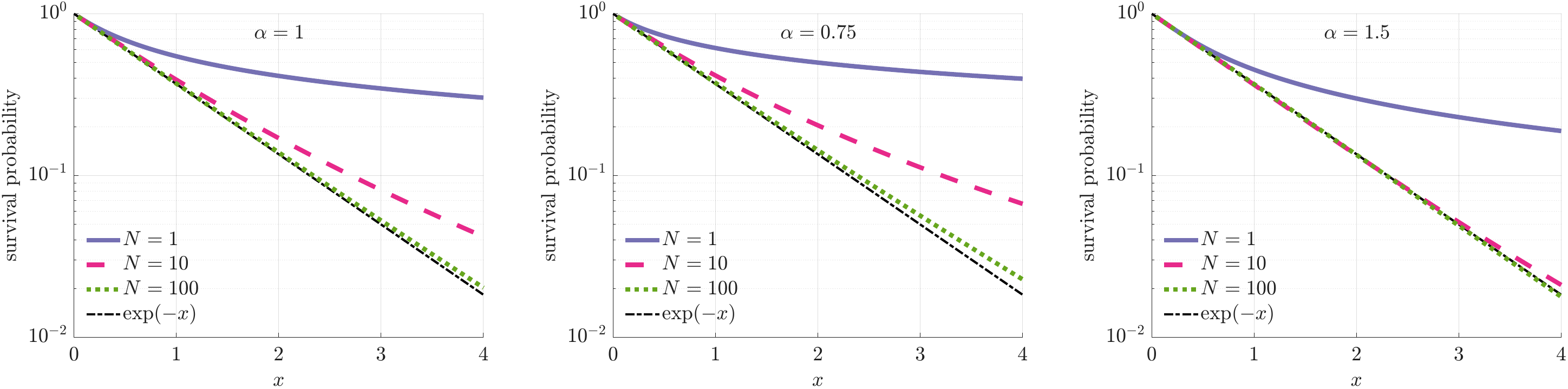}
\caption{The convergence in distribution in \eqref{eq:cdpdmp1} for $\eps=1$ and $\alpha=1$ (left), $\alpha=0.75$ (middle), and $\alpha=1.5$ (right).}
\label{fig:SBMpdmp}
\end{figure*}

\begin{figure}
\centering
\includegraphics[width=1\linewidth]{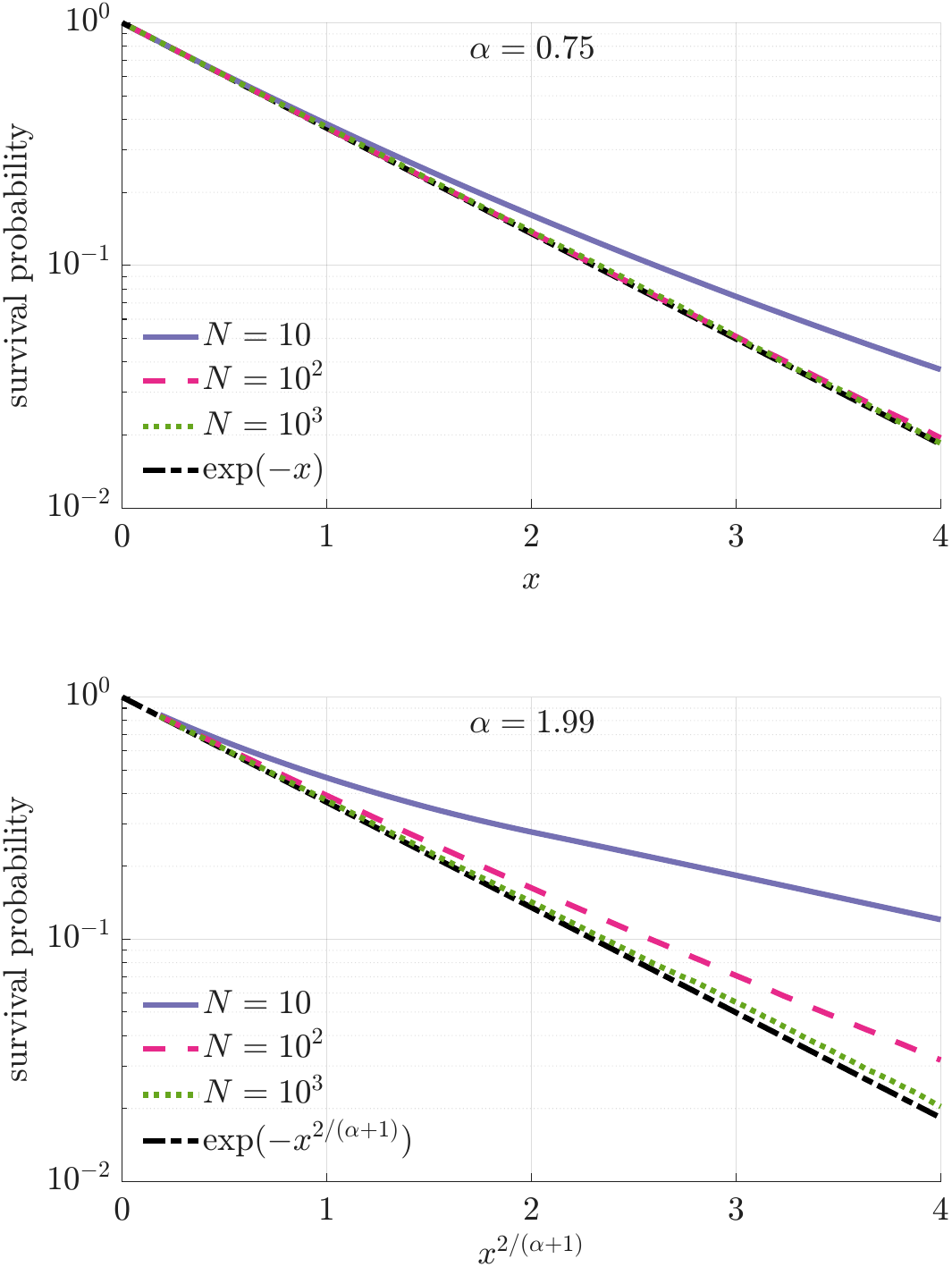}
\caption{The convergence in distribution in \eqref{eq:cdpdmp2} for $\eps=1$ and $\alpha=0.75$ (top panel) and $\alpha=1.99$ (bottom panel).}
\label{fig:FBMpdmp}
\end{figure}

Figure~\ref{fig:SBM} plots the decay of the mean fFPT as $N$ increases for the unbounded speed sBm in \eqref{eq:sbm} (circle markers) and the bounded speed sBm in \eqref{eq:finitesbm} (plus markers) for different values of $\eps$ in the three panels (the circle markers are the same in each panel). There are four important features of Figure~\ref{fig:SBM} which we emphasize.

First, the subdiffusive mean fFPT is the fastest (blue), followed by normal diffusion (black), then superdiffusion (red). Notice that this holds for both the unbounded speed process and its bounded speed counterpart.

Second, the mean fFPTs of the unbounded speed and bounded speed processes agree closely if $1\le N\ll N_1$, where $N_1$ is in \eqref{eq:N1sbm}. For $\eps=10^{-3}\ll1$ in the bottom left panel, we observe near perfect agreement for the entire range of $N$ plotted since $N_1>10^{108}$ (the bounded speed theory is not visible in this bottom left panel).

Third, the circle markers (unbounded speed sBm simulations) agree closely with the theory of section~\ref{sec:gaussian} for $N\gg1$. In particular, the solid curves are the estimates to the mean fFPT given in \eqref{eq:meang} where $a_N$ and $b_N$ are in Theorem~\ref{thm:gumbel} with $A$, $p$, and $C$ in \eqref{eq:sBmvalues}. The mean estimates from Corollary~\ref{cor:decay} are similar, though the agreement is not quite as close (these curves are omitted to simplify the figure).

Fourth, in agreement with the predictions of section~\ref{sec:crossover}, the plus markers (bounded speed sBm) agree with the theory of section~\ref{sec:gaussian} for $1\ll N\ll N_1$ and agree with the theory of section~\ref{sec:finite} (dashed curves, which are the mean estimate in \eqref{eq:mean1}) for $N\gg N_2$, where $N_1$ is in \eqref{eq:N1sbm} and $N_2$ is in \eqref{eq:N2sbm}.

To more closely illustrate the theory of section~\ref{sec:finite}, Figure~\ref{fig:SBMpdmp} plots the convergence in distribution in \eqref{eq:cdpdmp1} for $\eps=1$ and $\alpha=1$ (left panel), $\alpha=0.75$ (middle panel), and $\alpha=1.5$ (right panel).

\subsection{RLfBm}

We now consider the RLfBm in \eqref{eq:RLfBm} and its bounded speed analog in \eqref{eq:finitespeedprocess}. Figure~\ref{fig:FBMpdmp} plots the convergence in distribution in \eqref{eq:cdpdmp2} for $\eps=1$ and $\alpha=0.75$ (top panel) and $\alpha=1.99$ (bottom panel).

Figure~\ref{fig:FBM} plots the decay of the mean fFPT as $N$ increases for the bounded speed RLfBm in \eqref{eq:finitespeedprocess} (plus markers) for $\alpha=0.75$ (blue) and $\alpha=1$ (black). The blue and black dashed curves are the estimate to the mean fFPT,
\begin{align*}
    \E[T_N]
    \approx \tc\bigg(\frac{L^2}{4D\tc\ln N}\bigg)^{1/\alpha},
\end{align*}
given by Corollary~\ref{cor:decay}. The black solid curve is the estimate to the mean fFPT given in \eqref{eq:meang} where $a_N$ and $b_N$ are in Theorem~\ref{thm:gumbel} with $A$, $p$, and $C$ in \eqref{eq:sBmvalues} for $\alpha=1$. 

We again see that the subdiffusive mean fFPT is faster than the normal diffusive mean fFPT, even for the finite speed RLfBm. We also note that the unbounded speed theory agrees closely with the bounded speed simulations. In this plot, we take $\eps=5\times10^{-3}$.

\begin{figure}
\centering
\includegraphics[width=1\linewidth]{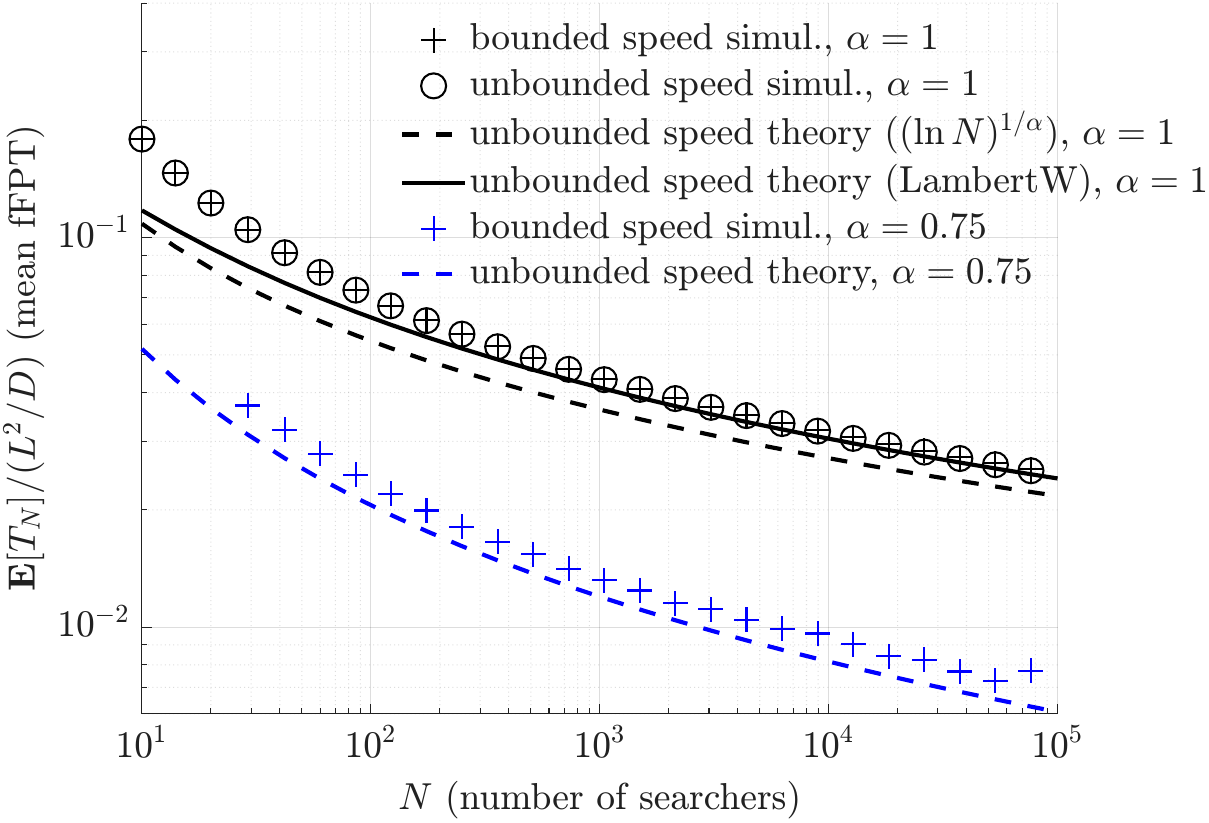}
\caption{Finite speed RLfBm in \eqref{eq:finitespeedprocess} agrees with the unbounded speed theory of section~\ref{sec:gaussian}. See the text for details.}
\label{fig:FBM}
\end{figure}

\section{Discussion}

In this paper, we studied fFPTs for broad classes of anomalous diffusion models. We found that the fFPT decays logarithmically in the number of searchers $N$,
\begin{align}\label{eq:logapprox}
    \E[(T_N)^m]
    \sim(\tc)^m\bigg(\frac{L^2}{4D\tc\ln N}\bigg)^{m/\alpha}\quad\text{as }N\to\infty.
\end{align}
We found that the logarithmic estimate in \eqref{eq:logapprox} is valid both for models with unbounded speed and models with bounded speed, though the rigorous asymptotic equivalence in \eqref{eq:logapprox} holds only for the unbounded speed models of section~\ref{sec:gaussian}. Indeed, for the bounded speed models of section~\ref{sec:finite}, the logarithmic decay eventually transitions to an exponential convergence to a strictly positive minimal search time $\tmin>0$ as $N$ increases. We characterized this exponential convergence in terms of a Weibull distribution. 

The decay in \eqref{eq:logapprox} implies the counterintuitive result that subdiffusion can be faster than normal diffusion and normal diffusion can be faster than superdiffusion. This agrees qualitatively with a prior analysis of the time fractional Fokker-Planck equation model of subdiffusion \cite{lawley2020sub}, though the dependence on the anomalous diffusion exponent $\alpha$ differs (compare $1/(\ln N)^{2/\alpha-1}$ in \eqref{eq:sub} to $1/(\ln N)^{1/\alpha}$ in \eqref{eq:logapprox}). We found that this counterintuitive result holds for both unbounded speed and bounded speed models.

To estimate the parameter regimes in which this counterintuitive result holds, observe that the counterintuitive result means that the estimate in \eqref{eq:logapprox} is an increasing function of $\alpha$, which is equivalent to 
\begin{align}\label{eq:Ncounter}
    N
    >\exp\bigg(\frac{L^2}{4D\tc}\bigg).
\end{align}
For the unbounded speed models of section~\ref{sec:gaussian}, we can always take $N$ large enough so that \eqref{eq:Ncounter} holds. For the bounded speed models of section~\ref{sec:finite}, the accuracy of the logarithmic estimate in \eqref{eq:logapprox} requires $1\ll N\ll N_1$, where
\begin{align*}
    N_1
    =\exp\Big[\frac{1}{2}\frac{\lambda L}{v}\frac{L}{v\tc}\Big(\frac{\tc}{\tmin}\Big)^\alpha\Big]
    =\exp\Big[\frac{L^2}{4D\tc}\Big(\frac{\tc}{\tmin}\Big)^\alpha\Big],
\end{align*}
where we used $D=v^2/(2\lambda)$. It is thus only possible to satisfy \eqref{eq:Ncounter} and $N\ll N_1$ if
\begin{align}\label{eq:makessense}
    \tc>\tmin.
\end{align}
For the bounded speed sBm, the formula for $\tmin$ in \eqref{eq:tmin} implies that \eqref{eq:makessense} is equivalent to
\begin{align}\label{eq:Ncountersbm}
    \tc>L/v.
\end{align} 
For the bounded speed RLfBm, the formula for $\tmin$ in  \eqref{eq:tminrlfbm} implies that \eqref{eq:makessense} is equivalent to
\begin{align}\label{eq:Ncounterrlfbm}
    \tc
    >\frac{1}{2}\Big(\sqrt{\alpha}+\frac{1}{\sqrt{\alpha}}\Big)(L/v).
\end{align}
The counterintuitive result thus requires that the characteristic time $\tc$ be sufficiently large compared to the ballistic travel time $L/v$, which is intuitive. Indeed, recall from \eqref{eq:msd1}, \eqref{eq:msd7}, and \eqref{eq:msd8} that the mean-squared displacement for these models is (approximately for the bounded speed models)
\begin{align*}
2D\tc(t/\tc)^\alpha,
\end{align*}
which is a decreasing function of $\alpha$ if $t<\tc$. Furthermore, \eqref{eq:Ncountersbm}-\eqref{eq:Ncounterrlfbm} are precisely the conditions in which $\tmin$ is an increasing function of $\alpha$. Hence, if \eqref{eq:Ncountersbm} or \eqref{eq:Ncounterrlfbm} is satisfied, then subdiffusion is faster than normal diffusion which is faster than superdiffusion for all sufficiently large $N$ for the bounded speed model.

We now comment on possible extensions our results. First, we focused on the fFPT for simplicity, but our results are readily extended to the $k$th fFPT for $1\le k\ll N$ (for instance, see Theorem~1 in Ref.~\cite{lawley2020uni}, section 3.2 in Ref.~\cite{lawley2020dist}, and section 3.2 in Ref.~\cite{lawley2021pdmp}). We also considered the FPT to $L>0$, but we could readily extend to escape from a finite interval $(-L_0,L)$. In fact, we could also consider fFPTs for search processes in higher spatial dimensions. Indeed, sections 5 and 6 in Ref.~\cite{lawley2021pdmp} considered fFPTs of run-and-tumble process in two and three spatial dimensions.

We conclude by mentioning some related work. Motivated by the non-physical predictions of models with unbounded speed, the fFPTs of bounded speed processes were studied in Ref.~\cite{lawley2021pdmp}. The particular class of models were PDMPs \cite{davis1984piecewise}, which includes run-and-tumble processes. In studying photon transport experiments, Ref.~\cite{carroll2025measurements} also studied fFPTs of bounded speed processes (in terms of the telegraph equation which describes run-and-tumble processes) and compared them to fFPTs of unbounded speed processes. Ref.~\cite{grebenkov2026fastest} recently studied fFPTs of run-and-tumble processes, including a numerical analysis of the bounded speed RLfBm in \eqref{eq:finitespeedprocess}. Refs.~\cite{hass2023anomalous, hass2024extreme, hass2024first} studied fFPTs of random walks which hop along a discrete lattice at discrete times and thus move at bounded speed. The fFPTs of discrete-time, discrete-space processes were also recently studied in Ref.~\cite{karamched2026entropic}. Ref.~\cite{lawley2020networks} studied fFPTs of continuous-time, discrete-space processes and, similar to the present work, found that the many searcher limit ($N\to\infty$) and the diffusion limit do not commute (see section~V in Ref.~\cite{lawley2020networks}). The fFPTs of superdiffusive processes, including L\'evy flights, were studied in Ref.~\cite{lawley2023super}, and it was found that the mean fFPT decays as $1/N$ as $N\to\infty$ rather than the logarithmic decay found in the present work.

\appendix

\section{\label{sec:proofs}Proofs}

We now prove Proposition~\ref{prop:key} and Theorem~\ref{thm:uni}. 

\begin{proof}[Proof of Proposition~\ref{prop:key}]
    For any $R>0$, define the FPT
    \begin{align*}
        \tau_R
        :=\inf\{t\ge0:X(t)\ge R\}.
    \end{align*}
    Large deviation theory implies (see, for example, page 59 in \cite{lifshits2012lectures}) that for any $T>0$,
    \begin{align*}
        \lim_{R\to\infty}R^{-2}\ln\P(\tau_{R}\le T)
        =-\frac{1}{4D\tc(T/\tc)^\alpha}.
    \end{align*}
    Self-similarity implies that for any $c>0$,
    \begin{align*}
        \P(\tau_R\le T)
        =\P(c\tau_{c^{-\alpha/2}R}\le T).
    \end{align*}
    Taking $R=t^{-\alpha/2}$, $T=L^{-2/\alpha}$, and $c=t^{-1}L^{-2/\alpha}$ completes the proof.
\end{proof}

Before proving Theorem~\ref{thm:uni}, we estimate an integral.

\begin{lemma}\label{lem:integralestimate}
If $\alpha>0$, $C>0$, and $\delta>0$, then 
    \begin{align*}
        \int_0^\delta (1-e^{-C/t^\alpha})^N\,\dd t
        \sim C^{1/\alpha}(\ln N)^{-1/\alpha}\quad\text{as }N\to\infty.
    \end{align*}
\end{lemma}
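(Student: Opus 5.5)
The plan is to show that the integrand behaves essentially like an indicator function: $(1-e^{-C/t^\alpha})^N$ is close to $1$ for $t$ slightly below the threshold $t_N:=(C/\ln N)^{1/\alpha}$ and close to $0$ slightly above it. The integral should then be $t_N$ up to a factor $1+o(1)$, which is exactly $C^{1/\alpha}(\ln N)^{-1/\alpha}$. First substitute $t=t_N s$, so that
\begin{align*}
    \int_0^\delta (1-e^{-C/t^\alpha})^N\,\dd t
    =t_N\int_0^{\delta/t_N}\big(1-N^{-s^{-\alpha}}\big)^N\,\dd s,
\end{align*}
since $e^{-C/(t_N^\alpha s^\alpha)}=e^{-s^{-\alpha}\ln N}=N^{-s^{-\alpha}}$. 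It then remains to show $I_N:=\int_0^{\delta/t_N}(1-N^{-s^{-\alpha}})^N\,\dd s\to1$. Note that $\delta/t_N\to\infty$.

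Fix $\eta\in(0,1)$ and split the $s$-integral into three pieces.

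On $[0,1-\eta]$ we have $s^{-\alpha}\ge(1-\eta)^{-\alpha}=:1+\gamma$ with $\gamma>0$. Using $(1-x)^N\ge1-Nx$, the integrand is at least $1-N^{-\gamma}\to1$, and it is at most $1$. So this piece tends to $1-\eta$.

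On $[1-\eta,1+\eta]$ the integrand lies in $[0,1]$, so this piece contributes at most $2\eta$.

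On $[1+\eta,\delta/t_N]$ we have $s^{-\alpha}<1$, and the elementary bound $(1-x)^N\le e^{-Nx}$ gives the integrand $\le\exp(-N^{1-s^{-\alpha}})$. Here one must be careful, because the upper limit grows with $N$. For large $s$, $1-s^{-\alpha}$ is bounded below by $1-(1+\eta)^{-\alpha}=:\gamma'>0$, so the integrand is at most $\exp(-N^{\gamma'})$. The piece is therefore at most $(\delta/t_N)\exp(-N^{\gamma'})=\delta C^{-1/\alpha}(\ln N)^{1/\alpha}\exp(-N^{\gamma'})\to0$.

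Together these give $1-\eta\le\liminf I_N\le\limsup I_N\le1+2\eta$. Letting $\eta\to0$ yields $I_N\to1$, and hence the stated asymptotic.

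The only step needing care is the third piece. There the domain of integration grows like $(\ln N)^{1/\alpha}$, and this has to be beaten by the super-polynomially small bound $\exp(-N^{\gamma'})$. That works because the integrand bound is uniform in $s\ge1+\eta$ (monotonicity of $s^{-\alpha}$). The remaining estimates are routine.
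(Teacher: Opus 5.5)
Your argument is correct, but it follows a different route from the paper's.

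\textbf{Your route.} You rescale by the threshold $t_N=(C/\ln N)^{1/\alpha}$ and show directly that the rescaled integrand $(1-N^{-s^{-\alpha}})^N$ is close to the indicator of $[0,1)$. It is uniformly close to $1$ on $[0,1-\eta]$ by Bernoulli's inequality. It is uniformly tiny on $[1+\eta,\delta/t_N]$, because the bound $\exp(-N^{\gamma'})$ holds for all $s\ge1+\eta$ (not only large $s$, as your wording suggests) and beats the interval length $(\ln N)^{1/\alpha}$. You only need $(1-x)^N\ge1-Nx$ and $1-x\le e^{-x}$, and you never have to shrink $\delta$.

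\textbf{The paper's route.} The paper first discards $[\delta',\delta]$ as exponentially small and substitutes $s=t^\alpha/C$. It then sandwiches $\ln(1-x)$ between $-x(1+x)$ and $-x$, which replaces the integrand by $\exp(-ANe^{-1/s})$ with $A\in\{1,1+e^{-1/\delta}\}$. Next it substitutes $u=ANe^{-1/s}$ and splits at $u=1$ and $u=(\ln N)/2$. The leading term is the exact integral $\int_0^1u^{-1}(\ln(AN/u))^{-1/\alpha-1}\,\dd u=\alpha(\ln(AN))^{-1/\alpha}$, and all remaining pieces are $O((\ln N)^{-1/\alpha-1})$.

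\textbf{What each buys.} Your squeeze is shorter and more elementary, and it makes the sharp-threshold mechanism transparent: the integrand is essentially the indicator of $t<t_N$. With $\eta$ fixed, however, it is purely qualitative. The paper's Laplace-type computation takes more work, but its bounds give a relative error of order $1/\ln N$. It also shows that a constant factor in front of $N$ enters only through $\ln N\mapsto\ln(AN)$. To recover such a rate from your approach, you would need to integrate your pointwise bounds across the transition window $|s-1|\lesssim1/(\alpha\ln N)$, rather than bounding the middle piece by its length $2\eta$.
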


\begin{proof}[Proof of Lemma~\ref{lem:integralestimate}]
Since $1-e^{-C/t^\alpha}$ is a decreasing function of $t>0$, we have that for any $\delta'\in(0,\delta)$
\begin{align*}
    \int_{\delta'}^\delta (1-e^{-C/t^\alpha})^N\,\dd t
    \le(1-e^{-C/(\delta')^\alpha})^N(\delta-\delta'),
\end{align*}
which vanishes exponentially fast as $N\to\infty$. Hence, we may take $\delta>0$ as small as we like. Changing variables $s=t^\alpha/C$ yields
\begin{align*}
    \int_0^\delta (1-e^{-C/t^\alpha})^N\,\dd t
    =\frac{C^{1/\alpha}}{\alpha}\int\limits_0^{\delta^\alpha/C}s^{1/\alpha-1}(1-e^{-1/s})^N\,\dd s,
\end{align*}
and therefore it is enough to determine the large $N$ behavior of 
\begin{align*}
    I
    &:=\int_0^{\delta}s^{1/\alpha-1}(1-e^{-1/s})^N\,\dd s\\
    &=\int_0^{\delta}s^{1/\alpha-1}\exp(N\ln(1-e^{-1/s}))\,\dd s.
\end{align*}
An elementary calculation shows that
\begin{align}\label{logbounds}
-x(1+x)
\le\ln(1-x)
\le-x,\quad \text{if }x\in[0,1/2].
\end{align}
Taking $\delta$ sufficiently small so that $e^{-1/s}\le1/2$ for all $s\in(0,\delta]$ and using \eqref{logbounds} yields the bounds
\begin{align}\label{Iplus}
\begin{split}
    I_{-}
    &:=\int_0^{\delta}s^{1/\alpha-1}\exp(-Ne^{-1/s}(1+e^{-1/s}))\,\dd s
    \le I\\
    &\le\int_0^{\delta}s^{1/\alpha-1}\exp(-Ne^{-1/s}))\,\dd s=:I_{+}(N).
\end{split}
\end{align}
Furthermore, since $e^{-1/s}$ is increasing, we have the lower bound
\begin{align}\label{Iminus}
\begin{split}
I_{-}
\ge\int_0^{\delta}s^{1/\alpha-1}\exp(-Ne^{-1/s}(1+e^{-1/\delta}))\,\dd s\\
=I_{+}((1+e^{-1/\delta})N),
\end{split}
\end{align}
where $I_{+}(\cdot)$ is defined in \eqref{Iplus}. 

To study $I_{+}(AN)$ for an arbitrary $A\ge1$, we change variables $u=ANe^{-1/s}$ to obtain
\begin{align*}
    I_+(AN)
    &=\int_0^{\delta}s^{1/\alpha-1}\exp(-ANe^{-1/s}))\,\dd s\\
    &=\int_0^{ANe^{-1/\delta}}u^{-1}(\ln(ANu^{-1}))^{-1/\alpha-1}e^{-u}\,\dd u\\
    &=I_{0,1}
    +I_{1,(\ln N)/2}
    +I_{(\ln N)/2,ANe^{-1/\delta}},
\end{align*}
where 
\begin{align*}
    I_{a,b}
    :=\int_a^b u^{-1}(\ln(ANu^{-1}))^{-1/\alpha-1}e^{-u}\,\dd u.
\end{align*}
Starting with the third integral, since $u^{-1}$ is decreasing and $(\ln(ANu^{-1}))^{-1/\alpha-1}$ is increasing for $u\in[0,ANe^{-1/\delta}]$, we have that
\begin{align*}
    I_{(\ln N)/2,ANe^{-1/\delta}}
    &\le\frac{2}{\ln N}(\ln(e^{1/\delta}))^{-1/\alpha-1}\int_{(\ln N)/2}^{ANe^{-1/\delta}} e^{-u}\,\dd u\\
    &\le\frac{2}{\ln N}(\ln(e^{1/\delta}))^{-1/\alpha-1}\frac{1}{\sqrt{N}}\\
    &=o((\ln N)^{-1/\alpha})\quad\text{as }N\to\infty.
\end{align*}
Moving to the second integral and again using that $u^{-1}$ is decreasing and $(\ln(ANu^{-1}))^{-1/\alpha-1}$ is increasing for $u\in[0,ANe^{-1/\delta}]$ gives
\begin{align*}
    I_{1,(\ln N)/2}
    &\le(\ln(2AN/\ln N))^{-1/\alpha-1}\int_1^{(\ln N)/2} e^{-u}\,\dd u\\
    &\le(\ln(2AN/\ln N))^{-1/\alpha-1}\\
    &=o((\ln N)^{-1/\alpha})\quad\text{as }N\to\infty.
\end{align*}
Turning finally to the first integral, we have that
\begin{align*}
    I_{0,1}
    &=\int_0^1 u^{-1}(\ln(ANu^{-1}))^{-1/\alpha-1}\,\dd u\\
    &\quad+\int_0^1 u^{-1}(\ln(ANu^{-1}))^{-1/\alpha-1}(1-e^{-u})\,\dd u\\
    &=\alpha(\ln(AN))^{-1/\alpha}\\
    &\quad+\int_0^1 u^{-1}(\ln(ANu^{-1}))^{-1/\alpha-1}(1-e^{-u})\,\dd u.
\end{align*}
Furthermore, $1-e^{-u}\le u$ for $u\ge0$, and therefore
\begin{align*}
    0
    &\le\int_0^1 u^{-1}(\ln(ANu^{-1}))^{-1/\alpha-1}(1-e^{-u})\,\dd u\\
    &\le\int_0^1 (\ln(ANu^{-1}))^{-1/\alpha-1}\,\dd u\\
    &\le(\ln(AN))^{-1/\alpha-1}
    =o((\ln N)^{-1/\alpha})\quad\text{as }N\to\infty.
\end{align*}
We have thus proven that for any $A\ge1$,
\begin{align*}
    I_+(AN)\sim\alpha(\ln N)^{-1/\alpha}\quad\text{as }N\to\infty.
\end{align*}
Using the bounds in \eqref{Iplus} and \eqref{Iminus} and the fact that $\delta>0$ can be taken arbitrarily small completes the proof.
\end{proof}

We now use Lemma~\ref{lem:integralestimate} to prove Theorem~\ref{thm:uni}.

\begin{proof}[Proof of Theorem~\ref{thm:uni}]
    As $\E[Z]=\int_0^\infty\P(Z>z)\,\dd z$ for any nonnegative random variable $Z$, we thus have
    \begin{align*}
        \E[T_N^m]
        =\int_0^\infty\P(T_N>u^{1/m})\,\dd u
        =\int_0^\infty(S(u^{1/m}))^N\,\dd u,
    \end{align*}
    where $S(t):=\P(\tau>t)$ denotes the survival probability of $\tau$. 
    Let $\eps\in(0,1)$. By \eqref{eq:conditionb}, there exists a $\delta>0$ so that
    \begin{align*}
        1-e^{-C_-/t^\alpha}
        \le S(t)
        \le 1-e^{-C_+/t^\alpha}\quad\text{for all }t\in(0,\delta],
    \end{align*}
    where $C_\pm=(1\pm\eps)C$. Hence,
    \begin{align*}
        \int_0^\delta (1-e^{-C_-^m/u^{\alpha/m}})^N\,\dd u
        &\le\int_0^\delta (S(u^{1/m}))^N\,\dd u\\
        &\le\int_0^\delta (1-e^{-C_+^m/u^{\alpha/m}})^N\,\dd u.
    \end{align*}
    Lemma~\ref{lem:integralestimate} implies that as $N\to\infty$,
    \begin{align*}
        \int_0^\delta (1-e^{-C_\pm^m/u^{\alpha/m}})^N\,\dd u\sim C_\pm^{m/\alpha}(\ln N)^{-m/\alpha}.
    \end{align*}
    Since $\eps\in(0,1)$ is arbitrary, we thus have that
    \begin{align*}
        \int_0^\delta (S(u^{1/m}))^N\,\dd u\sim C^{m/\alpha}(\ln N)^{-m/\alpha}\quad\text{as }N\to\infty.
    \end{align*}

    By \eqref{eq:conditiona}, there exists an $N_{0}\ge1$ so that $\int_{0}^{\infty}(S(t))^{N_{0}}\,\dd t<\infty$. Thus, if $N_1=2^{m-1}N_{0}$, then
\begin{align*}
    \int_0^\infty(S(u^{1/m}))^{N_1}\,\dd u
    =m\int_{0}^{\infty}t^{m-1}(S(t))^{N_1}\,\dd t<\infty.
\end{align*}
Hence, if $N\ge N_{1}$, then since $S$ is nonincreasing,
\begin{align*}
\int_{\delta}^{\infty}(S(u^{1/m}))^{N}\,\dd u
\le K(S(\delta^{1/m}))^{N},
\end{align*}
and $S(\delta^{1/m})<1$ by \eqref{eq:conditionb}, and we have defined the finite constant,
\begin{align*}
K
=\int_{\delta}^{\infty}\Big(\frac{S(u^{1/m})}{S(\delta^{1/m})}\Big)^{N_{1}}\,\dd u<\infty.
\end{align*}
Therefore, $\int_\delta^\infty (S(u^{1/m}))^N\,\dd u$ vanishes exponentially as $N\to\infty$, which completes the proof.
\end{proof}

\section{\label{sec:shorttimeRLfbm}Bounded speed RLfBm FPT distribution at short-time}

Let $p_0$ and $p_1=1-p_0$ be the initial distribution of $J$,
\begin{align*}
    \P(J(0)=-1)
    =p_0,\quad \P(J(0)=1)=p_1=1-p_0.
\end{align*}
The position of $X=X_\eps$ in \eqref{eq:finitespeedprocess} at time $t\ge0$ is
\begin{align}\label{eq:X}
    X(t)
    =K\bigg[t^\beta+2\sum_{j=1}^{M(t)}(-1)^j \Big(t-\sum_{i=1}^j s_i\Big)^\beta\bigg],
\end{align}
where $\beta=(1+\alpha)/2$, $M(t)\in\{0,1,\dots\}$ is the number of jumps before time $t$, and $s_1$, $s_2$, \dots are the holding times of $J$ (and thus $J$ jumps at times $s_1$, $s_1+s_2$, $s_1+s_2+s_3$, \dots), and 
\begin{align*}
    K
    =(-1)^{(1-J(0))/2}K_0,\quad 
    K_0=\frac{\sqrt{\alpha}}{\beta}v \tc^{1-\beta}.
\end{align*}
In particular, $\{s_j\}_{j\ge1}$ are iid exponential random variables with rate $\lambda>0$.

We now determine the short-time distribution of the FPT $\tau=\tau_\eps$ in \eqref{eq:taueps}. Our approach modifies the approach in section~4.1 of Ref.~\cite{lawley2021pdmp}. Summing over the number of jumps before time $\tmin$ gives
\begin{align}\label{eq:sum}
\begin{split}
&\P(\tmin<\tau<\tmin(1+\eps))\\
&=\sum_{j=0}^{\infty}\P(\tmin<\tau<\tmin(1+\eps),{M}(\tmin)=j).
\end{split}
\end{align}
The $j=0$ term is zero for $\eps$ sufficiently small,
\begin{align}\label{eq:j0}
\P(\tmin<\tau<\tmin(1+\eps),{M}(\tmin)=0)=0,
\end{align}
because if ${M}(\tmin)=0$ and $J(0)=1$, then $\tau=\tmin$ and if ${M}(\tmin)=0$ and $J(0)=-1$, then $\tau\ge(1+\zeta)\tmin$ almost surely for some $\zeta>0$. Furthermore, it is straightforward to check that the terms with $j\ge3$ do not affect \eqref{eq:sum} to leading order as $\eps\to0$. 

We thus focus on the $j=1$ and $j=2$ terms in the sum in \eqref{eq:sum}. Looking first at the $j=1$ term, it is straightforward to check that if $\beta\le1$, then
\begin{align*}
\P(\tmin<\tau<\tmin(1+\eps),{M}(\tmin)=1,J(0)=1)=\O(\eps^{2}).
\end{align*}
It is also immediate that as $\eps\to0$,
\begin{align*}
&\P(\tmin<\tau<\tmin(1+\eps),{M}(\tmin)=1|J(0)=-1)\sim\\
&\P(\tmin<\tau<\tmin(1+\eps),{M}(\tmin(1+\eps))=1|J(0)=-1).
\end{align*}
If $\tmin<\tau<\tmin(1+\eps)$, ${M}(\tmin(1+\eps))=1$, and $J(0)=-1$, then \eqref{eq:X} implies
\begin{align*}
    &X(\tmin(1+\eps))\\
    &=-K_0\Big[\tmin^\beta(1+\eps)^\beta-2(\tmin(1+\eps)-s_1)^\beta\Big]>L,
\end{align*}
which is equivalent to
\begin{align*}
    s_1
    <\tmin\big(1+\eps-2^{-1/\beta}(1+(1+\eps)^\beta)^{1/\beta}\big)
    \sim\eps\tmin/2.
\end{align*}
Hence,
\begin{align*}
&\P(\tmin<\tau<\tmin(1+\eps),{M}(\tmin(1+\eps))=1|J(0)=-1)\\
&\quad\sim\P(s_{1}<\eps\tmin/2,s_{2}>\tmin(1+\eps)-s_{1}|J(0)=-1)\\
&\quad\sim\P(s_{1}<\eps\tmin/2,s_{2}>\tmin|J(0)=-1)\\
&\quad\sim\big[1-e^{-\lambda\eps\tmin/2}\big]e^{-\lambda\tmin}
\sim\lambda\tmin(\eps/2)e^{-\lambda\tmin}.
\end{align*}
Putting this together, we have that for $\beta\le1$,
\begin{align}\label{eq:j1}
\begin{split}
&\P(\tmin<\tau<\tmin(1+\eps),{M}(\tmin)=1)\\
&\sim p_{0}\lambda\tmin(\eps/2)e^{-\lambda\tmin}\quad\text{as }\eps\to0.
\end{split}
\end{align}

For $\beta>1$, we have that as $\eps\to0$,
\begin{align*}
    &\P(\tmin<\tau<\tmin(1+\eps),{M}(\tmin)=1|J(0)=1)\\
    &\sim\P(\tmin<\tau<\tmin(1+\eps),{M}(\tmin(1+\eps))=1|J(0)=1).
\end{align*}
Now, if $\tmin<\tau<\tmin(1+\eps)$, ${M}(\tmin(1+\eps))=1$, and $J(0)=1$, then \eqref{eq:X} implies
\begin{align*}
    &X(\tmin(1+\eps))\\
    &=K_0\Big[\tmin^\beta(1+\eps)^\beta-2(\tmin(1+\eps)-s_1)^\beta\Big]>L,
\end{align*}
which is equivalent to
\begin{align*}
    s_1/\tmin
    >1+\eps-2^{-1/\beta}((1+\eps)^\beta-1)^{1/\beta}.
\end{align*}
Hence, for $\beta>1$, 
\begin{align*}
    &\P(\tmin<\tau<\tmin(1+\eps),M(\tmin(1+\eps))=1|J(0)=1)\\
    &=\lambda\tmin e^{-\lambda\tmin(1+\eps)}\Big[2^{-1/\beta}\big((1+\eps)^\beta-1\big)^{1/\beta}\Big]_+\\
    &\sim\lambda\tmin e^{-\lambda\tmin}\Big(\frac{\beta}{2}\Big)^{1/\beta}\eps^{1/\beta}\quad\text{if }\beta>1.
\end{align*}

Moving to the $j=2$ term in \eqref{eq:sum}, we have
\begin{align*}
\P(\tmin<\tau<\tmin(1+\eps),{M}(\tmin)=2,J(0)=-1)
\end{align*}
does not contribute to leading order as $\eps\to0$, and
\begin{align*}
&\P(\tmin<\tau<\tmin(1+\eps),{M}(\tmin)=2|J(0)=1)\sim\\
&\P(\tmin<\tau<\tmin(1+\eps),{M}(\tmin(1+\eps))=2|J(0)=1).
\end{align*}
Now, if $\tmin<\tau<\tmin(1+\eps)$, ${M}(\tmin(1+\eps))=2$, and $J(0)=1$, then \eqref{eq:X} implies
\begin{align*}
    &X(\tmin(1+\eps))
    =K_0\Big[\tmin^\beta(1+\eps)^\beta-2(\tmin(1+\eps)-s_1)^\beta\\
    &\qquad\qquad\qquad\qquad\qquad+2(\tmin(1+\eps)-s_1-s_2)^\beta\Big]>L,
\end{align*}
which is equivalent to 
\begin{align*}
    s_2
    &<\tmin(1+\eps)-s_1\\
    &\quad-\Big[\big(\tmin(1+\eps)-s_1\big)^\beta-\frac{1}{2}(\tmin^\beta(1+\eps)^\beta-\tmin^\beta)\Big]^{1/\beta}\\
    &\sim(\eps/2)  \tmin^\beta (\tmin-s_1)^{1-\beta }\quad\text{as }\eps\to0.
\end{align*}
Hence, for $\beta\in(0,2)$, we have that
\begin{align*}
&\P(\tmin<\tau<\tmin(1+\eps),{M}(\tmin(1+\eps))=2|J(0)=1)\\
&\quad\sim(\lambda \tmin)^2 e^{-\lambda\tmin}\frac{1}{2(2-\beta)}\eps\quad\text{as }\eps\to0.
\end{align*}
Putting this together, we obtain that as $\eps\to0,$
\begin{align}\label{eq:j2}
\begin{split}
&\P(\tmin<\tau<\tmin(1+\eps),{M}(\tmin)=2)\\
&\quad\sim p_{1}(\lambda \tmin)^2 e^{-\lambda\tmin}\frac{1}{2(2-\beta)}\eps.
\end{split}
\end{align}

\subsection{Summary for $\beta\le1$}
Summarizing the $\beta\le1$ case, we have that
\begin{align*}
\P(\tmin<\tau<\tmin(1+\eps))
&\sim(1-q)A\eps\quad\text{as }\eps\to0,
\end{align*}
where
\begin{align*}
A
&=(1-q)^{-1}\Big(p_0+p_1\frac{\lambda \tmin}{2-\beta}\Big)\frac{\lambda\tmin}{2}e^{-\lambda\tmin}\nonumber\\
&=\Big(p_0+p_1\frac{\lambda \tmin}{2-\beta}\Big)\frac{\lambda\tmin}{2}\frac{e^{-\lambda\tmin}}{1-p_1e^{-\lambda\tmin}}.
\end{align*}

\subsection{Summary for $\beta>1$}

Summarizing the $\beta>1$ case, we have that
\begin{align*}
    \P(\tmin<\tau<\tmin(1+\eps))
    &\sim(1-q)A\eps^{1/\beta}\quad\text{as }\eps\to0,
\end{align*}
where
\begin{align*}
    A
    &=(1-q)^{-1}p_1\lambda \tmin e^{-\lambda \tmin}\Big(\frac{\beta}{2}\Big)^{1/\beta}\\
    &=\frac{p_1\lambda \tmin e^{-\lambda \tmin}}{1-p_1 e^{-\lambda\tmin}}\Big(\frac{\beta}{2}\Big)^{1/\beta}.
\end{align*}

\section{\label{sec:numericaldetails}Numerical simulation details}

We now describe the numerical methods used in section~\ref{sec:numerics}. We start with the sBm in section~\ref{sec:numericalsbm}.

\subsection{\label{sec:numercaldetailssbm}sBm}

Figures~\ref{fig:SBMcd}, \ref{fig:SBM}, and \ref{fig:SBMpdmp} use the exact FPT distribution formula for the unbounded speed sBm in \eqref{eq:moredetailed} and the following exact FPT distribution formula for the bounded speed sBm,
\begin{align*}
    \P(\tau_\eps>t)
    =S_\eps(\tc(t/\tc)^\alpha),
\end{align*}
where $S_\eps(t)$ is the following FPT distribution for the bounded speed sBm (and bounded speed RLfBm) when $\alpha=1$ \cite{malakar2018steady, grebenkov2026fastest},
\begin{align*}
    S_\eps(t)
    =1-\Theta(t-\tmin)\bigg[e^{-\kappa}+\kappa\int\limits_1^{t/\tmin}\frac{I_1(\kappa\sqrt{z^2-1})}{e^{\kappa z}\sqrt{z^2-1}}\,\dd z\bigg],
\end{align*}
where $\Theta(x)=1$ if $x\ge0$ and $\Theta(x)=0$ if $x<0$, $I_1$ is the modified Bessel function of the first kind, $\kappa=\lambda L/v$, and $\tmin=L/v$ is the minimal time in \eqref{eq:tmin} when $\alpha=1$.

\subsection{RLfBm}

The black curves and markers in Figure~\ref{fig:FBM} use the exact FPT distribution formulas described above in Appendix~\ref{sec:numercaldetailssbm}. Figure~\ref{fig:FBMpdmp} and the $\alpha\neq1$ plus markers in Figure~\ref{fig:FBM} use stochastic simulations of the bounded speed RLfBm in \eqref{eq:finitespeedprocess}. We now describe the stochastic simulation method.

For Figure~\ref{fig:FBM}, we use the representation of the process in \eqref{eq:X} and simulate the exponential times $s_1,s_2,\dots$ until either $X(\sum_{i=1}^n s_i)>L$ or the time exceeds $100\tmin$. If the former occurs, then we numerically solve for the time $\tau\in[\sum_{i=1}^{n-1} s_i,\sum_{i=1}^n s_i]$ such that $X(\tau)=L$. This method uses the fact that $X$ is a monotone function of time between jumps if $\alpha\in(0,1]$. We then repeat this method to obtain $M\gg1$ realizations of the FPT $\tau$, denoted $\{\tau_1,\dots,\tau_M\}$. Then, for a given value of $N\ge1$, we obtain $K=\floor*{M/N}$ iid realizations of the fFPT $T_{N}$ (where $\floor*{\cdot}$ denotes the floor operator) denoted by $T_{N}^{(1)},\dots,T_{N}^{(K)}$ by defining the following fFPT for $k\in\{1,\dots,K\}$,
\begin{align*}
T_{N}^{(k)}
:=\min\{\tau_{(k-1)N+1},\tau_{(k-1)N+2},\dots,\tau_{kN+1}\}.
\end{align*}
From these $K$ realizations of $T_{N}$, we then compute the mean fFPT in Figure~\ref{fig:FBM}.

For Figure~\ref{fig:FBMpdmp}, we similarly compute realizations of the fFPT from realizations of the FPT with two key differences. The first key difference is that we compute realizations of the FPT $\tau$ that are conditioned on being greater than the minimal time $\tmin$ in \eqref{eq:tminrlfbm}, which we denote by $\taut$. To obtain this conditioned FPT, we first simulate $J(0)$ conditioned that $\tau>\tmin$. If the unconditioned distribution of $J(0)$ is $\P(J(0)=-1)=p_0=1-p_1$, then the conditioned distribution of $J(0)$ is
\begin{align*}
    \P(J(0)=1|\tau>\tmin)
    &=\frac{\P(J(0)=1,\tau>\tmin)}{\P(\tau>\tmin)}\\
    &=\frac{p_1(1-e^{-\lambda\tmin})}{p_0+p_1(1-e^{-\lambda\tmin})},
\end{align*}
and $\P(J(0)=-1|\tau>\tmin)=1-\P(J(0)=1|\tau>\tmin)$. If $J(0)=-1$, then $s_1$ is exponentially distributed with rate $\lambda$. If $J(0)=1$, then $s_1$ is exponentially distributed with rate $\lambda$ conditioned that $s_1<\tmin$, which we simulate via
\begin{align*}
    s_{1}=-\ln(U(1-e^{-\lambda\tmin})+e^{-\lambda\tmin})/\lambda,
\end{align*}
where $U$ is uniformly distributed on $[0,1]$.

The next key difference is that for $\alpha>1$, the process $X$ is no longer guaranteed to be a monotone function of time in between jumps. We therefore simulate the holding times $s_1,s_2,s_3,\dots$ until the sum $\sum_i s_i$ exceeds $100\tmin$ and then we check if $X(s)>L$ by checking 20 evenly distributed values of $s$ between each pair of jump times. If this occurs, then we numerically solve for the time $\tau$ such that $X(\tau)=L$.

\subsubsection*{Acknowledgments}
SDL was supported by the National Science Foundation (Grant Number DMS-2325258). 


\bibliography{library.bib}

@article{hass2024extreme,
  title={Extreme diffusion measures statistical fluctuations of the environment},
  author={Hass, Jacob B and Drillick, Hindy and Corwin, Ivan and Corwin, Eric I},
  journal={Physical Review Letters},
  volume={133},
  number={26},
  pages={267102},
  year={2024},
  publisher={APS}
}

@article{hass2024first,
  title={First-passage time for many-particle diffusion in space-time random environments},
  author={Hass, Jacob B and Corwin, Ivan and Corwin, Eric I},
  journal={Physical Review E},
  volume={109},
  number={5},
  pages={054101},
  year={2024},
  publisher={APS}
}

@article{hass2023anomalous,
  title={Anomalous fluctuations of extremes in many-particle diffusion},
  author={Hass, Jacob B and Carroll-Godfrey, Aileen N and Corwin, Ivan and Corwin, Eric I},
  journal={Physical Review E},
  volume={107},
  number={2},
  pages={L022101},
  year={2023},
  publisher={APS}
}

@article{karamched2026entropic,
  title={Entropic Collapse and Extreme First-Passage Times in Discrete Ballistic Transport},
  author={Karamched, Bhargav R},
  journal={arXiv preprint arXiv:2601.03622},
  year={2026}
}

@article{carroll2025measurements,
  title={Measurements of extreme first passage times in photon transport},
  author={Carroll-Godfrey, Aileen N and Corwin, Eric I},
  journal={arXiv preprint arXiv:2502.19359},
  year={2025}
}

@article{kosztolowicz2022first,
  title={First-passage time for the g-subdiffusion process of vanishing particles},
  author={Koszto{\l}owicz, Tadeusz},
  journal={Physical Review E},
  volume={106},
  number={2},
  pages={L022104},
  year={2022},
  publisher={APS}
}

@article{rangarajan2000anomalous,
  title={Anomalous diffusion and the first passage time problem},
  author={Rangarajan, Govindan and Ding, Mingzhou},
  journal={Physical Review E},
  volume={62},
  number={1},
  pages={120},
  year={2000},
  publisher={APS}
}

@incollection{lawley2024competition,
  title={Competition of many searchers},
  author={Lawley, Sean D},
  booktitle={Target Search Problems},
  pages={281--303},
  year={2024},
  publisher={Springer}
}

@article{davis1984piecewise,
  title={Piecewise-deterministic Markov processes: A general class of non-diffusion stochastic models},
  author={Davis, Mark HA},
  journal={Journal of the Royal Statistical Society: Series B (Methodological)},
  volume={46},
  number={3},
  pages={353--376},
  year={1984},
  publisher={Wiley Online Library}
}

@inproceedings{hespanha2004stochastic,
  title={Stochastic hybrid systems: Application to communication networks},
  author={Hespanha, Joao P},
  booktitle={International Workshop on Hybrid Systems: Computation and Control},
  pages={387--401},
  year={2004},
  organization={Springer}
}

@article{dorsogna2026mean,
  title={Mean first passage times of velocity jump processes in higher dimensions},
  author={D'Orsogna, Maria R and Lindsay, Alan E and Hillen, Thomas},
  journal={arXiv preprint arXiv:2603.29241},
  year={2026}
}

@article{bena2006,
  title={Dichotomous {M}arkov noise: {E}xact results for out-of-equilibrium systems},
  author={Bena, I.},
  journal={Int. J. Mod. Phys. B},
  volume={20},
  number={20},
  pages={2825--2888},
  year={2006},
  publisher={World Scientific}
}

@article{malakar2018steady,
  title={Steady state, relaxation and first-passage properties of a run-and-tumble particle in one-dimension},
  author={Malakar, Kanaya and Jemseena, V and Kundu, Anupam and Vijay Kumar, K and Sabhapandit, Sanjib and Majumdar, Satya N and Redner, S and Dhar, Abhishek},
  journal={Journal of Statistical Mechanics: Theory and Experiment},
  volume={2018},
  number={4},
  pages={043215},
  year={2018},
  publisher={IOP Publishing and SISSA}
}

@article{lim2002self,
  title={Self-similar Gaussian processes for modeling anomalous diffusion},
  author={Lim, Soonchieh C and Muniandy, Sithi Vinayakam},
  journal={Physical Review E},
  volume={66},
  number={2},
  pages={021114},
  year={2002},
  publisher={APS}
}

@article{angelani2014first,
  title={First-passage time of run-and-tumble particles},
  author={Angelani, L and Di Leonardo, R and Paoluzzi, M},
  journal={The European Physical Journal E},
  volume={37},
  number={7},
  pages={59},
  year={2014},
  publisher={Springer}
}

@article{aurzada2022asymptotics,
	title = {Asymptotics of the {Persistence} {Exponent} of {Integrated} {Fractional} {Brownian} {Motion} and {Fractionally} {Integrated} {Brownian} {Motion}},
	volume = {67},
	issn = {0040-585X, 1095-7219},
	url = {https://epubs.siam.org/doi/10.1137/S0040585X97T990769},
	doi = {10.1137/S0040585X97T990769},
	language = {en},
	number = {1},
	urldate = {2026-03-28},
	journal = {Theory of Probability \& Its Applications},
	author = {Aurzada, F. and Kilian, M.},
	month = may,
	year = {2022},
	pages = {77--88},
}

@article{dean2021position,
  title={Position distribution in a generalized run-and-tumble process},
  author={Dean, David S and Majumdar, Satya N and Schawe, Hendrik},
  journal={Physical Review E},
  volume={103},
  number={1},
  pages={012130},
  year={2021},
  publisher={APS}
}

@incollection{lifshits2012lectures,
  title={Lectures on Gaussian processes},
  author={Lifshits, Mikhail},
  booktitle={Lectures on Gaussian Processes},
  pages={1--117},
  year={2012},
  publisher={Springer}
}

@article{mandelbrot1968fractional,
  title={Fractional Brownian motions, fractional noises and applications},
  author={Mandelbrot, Benoit B and Van Ness, John W},
  journal={SIAM review},
  volume={10},
  number={4},
  pages={422--437},
  year={1968},
  publisher={SIAM}
}

@article{joseph1989,
  title={Heat waves},
  author={Joseph, D D and Preziosi, L},
  journal={Rev Mod Phys},
  volume={61},
  number={1},
  pages={41},
  year={1989},
  publisher={APS}
}

@article{kuske1997,
  title={Large deviation theory for stochastic difference equations},
  author={Kuske, R and Keller, JB},
  journal={Eur J Appl Math},
  volume={8},
  number={6},
  pages={567--580},
  year={1997},
  publisher={Cambridge University Press}
}

@article{keller2004,
  title={Diffusion at finite speed and random walks},
  author={Keller, J B},
  journal={Proc Natl Acad Sci},
  volume={101},
  number={5},
  pages={1120},
  year={2004},
  publisher={National Academy of Sciences}
}

@article{grebenkov2026fastest,
  title={Fastest first-passage time for multiple searchers with finite speed},
  author={Grebenkov, Denis S and Metzler, Ralf and Oshanin, Gleb},
  journal={arXiv preprint arXiv:2602.15627},
  year={2026}
}

@book{grebenkov2024target,
  title={Target search problems},
  author={Grebenkov, Denis and Metzler, Ralf and Oshanin, Gleb},
  pages={1--29},
  year={2024},
  publisher={Springer}
}

@article{gomez2024first,
  title={First hitting time of a one-dimensional L{\'e}vy flight to small targets},
  author={Gomez, Daniel and Lawley, Sean D},
  journal={SIAM Journal on Applied Mathematics},
  volume={84},
  number={3},
  pages={1140--1162},
  year={2024},
  publisher={SIAM}
}

@article{lawley2023slow,
  title={Slowest first passage times, redundancy, and menopause timing},
  author={Lawley, Sean D and Johnson, Joshua},
  journal={Journal of Mathematical Biology},
  volume={86},
  number={6},
  pages={1--53},
  year={2023},
  publisher={Springer}
}

@article{lawley2020networks,
  title={Extreme first-passage times for random walks on networks},
  author={Lawley, Sean D},
  journal={Physical Review E},
  volume={102},
  number={6},
  pages={062118},
  year={2020},
  publisher={APS}
}

@article{lawley2023super,
  title={{Extreme statistics of superdiffusive L{\'e}vy flights and every other L{\'e}vy subordinate Brownian motion}},
  author={Lawley, Sean D},
  journal={Journal of Nonlinear Science},
  volume={33},
  number={4},
  pages={53},
  year={2023},
  publisher={Springer}
}

@article{lawley2021pdmp,
  title={Extreme first passage times of piecewise deterministic Markov processes},
  author={Lawley, Sean D},
  journal={Nonlinearity},
  volume={34},
  number={5},
  pages={2750},
  year={2021},
  publisher={IOP Publishing}
}

@article{lawley2020sub,
  title={Extreme statistics of anomalous subdiffusion following a fractional Fokker-Planck equation: Subdiffusion is faster than normal diffusion},
  author={Lawley, Sean D},
  journal={Journal of Physics A: Mathematical and Theoretical},
  year={2020},
  publisher={IOP Publishing},
  doi = {10.1088/1751-8121/aba39c},
}

@article{guigas2008,
  title={Sampling the cell with anomalous diffusion--the discovery of slowness},
  author={Guigas, Gernot and Weiss, Matthias},
  journal={Biophysical journal},
  volume={94},
  number={1},
  pages={90--94},
  year={2008},
  publisher={Elsevier}
}

@article{metzler2000,
  title={The random walk's guide to anomalous diffusion: a fractional dynamics approach},
  author={Metzler, Ralf and Klafter, Joseph},
  journal={Physics reports},
  volume={339},
  number={1},
  pages={1--77},
  year={2000},
  publisher={Elsevier}
}

@article{lawley2020dist,
  title={Distribution of extreme first passage times of diffusion},
  author={Lawley, S D},
  journal={Journal of Mathematical Biology},
  year={2020},
  doi={10.1007/s00285-020-01496-9},
  url = {http://dx.doi.org/10.1007/s00285-020-01496-9}
}

@article{condamin2008,
  title={Probing microscopic origins of confined subdiffusion by first-passage observables},
  author={Condamin, S and Tejedor, Vincent and Voituriez, Rapha{\"e}l and B{\'e}nichou, Olivier and Klafter, Joseph},
  journal={Proceedings of the National Academy of Sciences},
  volume={105},
  number={15},
  pages={5675--5680},
  year={2008},
  publisher={National Acad Sciences}
}

@article{condamin2007,
  title={First-passage time distributions for subdiffusion in confined geometry},
  author={Condamin, S and B{\'e}nichou, O and Klafter, J},
  journal={Physical review letters},
  volume={98},
  number={25},
  pages={250602},
  year={2007},
  publisher={APS}
}

@article{metzler1999,
  title={Anomalous diffusion and relaxation close to thermal equilibrium: A fractional {Fokker-Planck} equation approach},
  author={Metzler, Ralf and Barkai, Eli and Klafter, Joseph},
  journal={Physical review letters},
  volume={82},
  number={18},
  pages={3563},
  year={1999},
  publisher={APS}
}

@book{billingsley2013,
  title={Convergence of probability measures},
  author={Billingsley, P},
  year={2013},
  publisher={John Wiley \& Sons}
}

@article{corless1996,
  title={On the {LambertW} function},
  author={Corless, RM and Gonnet, GH and Hare, DEG and Jeffrey, DJ and Knuth, DE},
  journal={Advances in Computational mathematics},
  volume={5},
  number={1},
  pages={329--359},
  year={1996},
  publisher={Springer}
}

@article{lawley2020uni,
  title={Universal formula for extreme first passage statistics of diffusion},
  author={Lawley, S D},
  journal={Phys Rev E},
  volume={101},
  number={1},
  pages={012413},
  year={2020},
  publisher={APS}
}

@book{coles2001,
  title={An introduction to statistical modeling of extreme values},
  author={Coles, S and Bawa, J and Trenner, L and Dorazio, P},
  volume={208},
  year={2001},
  publisher={Springer}
}

@article{weiss1983,
  title={Order statistics for first passage times in diffusion processes},
  author={Weiss, G H and Shuler, K E and Lindenberg, K},
  journal={J Stat Phys},
  volume={31},
  number={2},
  pages={255--278},
  year={1983},
  publisher={Springer}
}

\end{document}